\documentclass[aip,10pt,twocolumn]{revtex4-1}

\usepackage{subfigure}
\usepackage{graphicx}
\usepackage{bm}
\usepackage{amssymb}
\usepackage{amsfonts}
\usepackage{amsmath}
\usepackage{lineno}
\usepackage{amsthm}

\newtheorem{proposition}{Proposition}[section]
\newtheorem{remark}{Remark}[section]

\def\FISHER{\mathbf{F}_{\mathcal{R}}}
\def\FISHERR{\mathbf{F_{\mathcal{H}}}}

\def\st{\{\sigma_t\}_{t\ge 0}}
\def\ast{\{\widetilde\sigma_t\}_{t\ge 0}}
\def\PROCS{\st} 
\def\PROCSAPP{\ast} 

\def\LAWEXACT{Q_{[0,T]}^{\theta}}
\def\LAWAPPROX{Q_{[0,T]}^{\theta+\epsilon}}
\def\PATHS{\LAWEXACT}
\def\PATHSAPP{\LAWAPPROX}
\def\EQUIL{\mu^\theta}
\def\EQUILAPP{\mu^{\theta+\epsilon}}

\def\RELENTR#1#2{\mathcal{H}\left({#1}\SEP{#2}\right)}
\def\RELENT#1#2{\mathcal{R}\left({#1}\SEP{#2}\right)}

\def\LATT{{\Lambda}_N}

\def\SPINSP{\Sigma}

\def\R{\mathbb{R}}
\def\NUMSP{{K}}                       
\def\PROCMICRO{\sigma_t}
\def\SEP{{\,|\,}}           
\def\SIGMA{{\mathcal{S}_N}}
\def\SPINSP{\Sigma}
\def\COMMA{\,,}             
\def\PERIOD{\,.}            
\def\Oo{\mathcal{O}}
\def\BIGO{\Oo}

\def\VIZ#1{(\ref{#1})}      


\begin{document}
\title{A Relative Entropy Rate Method for Path Space Sensitivity Analysis
of Stationary Complex Stochastic  Dynamics}

\author{Yannis Pantazis}
\affiliation{Department of Mathematics and Statistics, University of Massachusetts, Amherst, MA, USA.}

\author{Markos A. Katsoulakis}
\affiliation{Department of Mathematics and Statistics, University of Massachusetts, Amherst, MA, USA.}

\date{\today}

\begin{abstract}
We propose a new  sensitivity analysis methodology for complex stochastic dynamics based on the
Relative Entropy Rate. The method  becomes computationally 
feasible  at the stationary  regime of the process and involves the calculation of suitable observables
in path space for the Relative Entropy Rate and the corresponding Fisher Information Matrix. 
The stationary regime is crucial for stochastic dynamics and  here allows us to address the
sensitivity analysis of  complex systems, including examples of processes with complex
landscapes that exhibit metastability, non-reversible systems from a statistical mechanics
perspective, and  high-dimensional, spatially distributed models. All these systems exhibit,
typically  non-gaussian   stationary probability distributions, while in the case of high-dimensionality,
histograms  are impossible to construct directly. Our proposed methods bypass these
challenges relying  on the direct Monte Carlo simulation of rigorously derived  observables for 
the Relative Entropy Rate and Fisher Information in path space rather than on the  stationary
probability distribution itself. We demonstrate the capabilities of the proposed methodology
by focusing here on  two classes of problems:  (a) Langevin particle systems with either reversible
(gradient) or non-reversible (non-gradient) forcing,  highlighting the ability of the  method to carry
out sensitivity analysis in non-equilibrium systems; and, (b) spatially extended  Kinetic Monte Carlo
models,  showing that the method can  handle high-dimensional problems.
\end{abstract}

\keywords{Sensitivity analysis, Relative entropy rate, Fisher information matrix, kinetic Monte Carlo,
Markov processes, Langevin equations}

\maketitle

\section{Introduction}
\label{intro}
In this paper  we propose  the Relative Entropy Rate as a sensitivity analysis  tool  for complex
stochastic dynamics, based on information theory and non-equilibrium statistical mechanics
methods. These calculations become computationally feasible  at the stationary process regime
and involve the calculation of   suitable  observables  in path space for the Relative Entropy
Rate and the corresponding Fisher Information Matrix.  The stationary regime, i.e. stochastic
dynamics where the initial  probability distribution is the stationary distribution reached after long-time integration, 
is especially crucial  for complex systems: it includes dynamic transitions between
metastable states in complex, high-dimensional energy landscapes, intermittency, as well as Non Equilibrium
Steady States (NESS) for non-reversible systems, while at this regime we also construct
phase diagrams for complex systems. Hence their  sensitivity analysis is a crucial question
in determining which parameter directions are the most/least sensitive to perturbations,
uncertainty  or errors resulting from parameter estimation.  Recently there has been
significant progress in developing sensitivity analysis tools for low-dimensional stochastic
processes at the transient regime, such as well-mixed chemical reactions. Some of the
mathematical tools included discrete derivatives \cite{Gunawan:05}, Girsanov transformations
\cite{Nakayama:94,Plyasunov:07}, polynomial chaos \cite{Kim:07},  and coupling
of stochastic processes \cite{Rathinam:10}.

On the other hand, it is often the case that we are interested in  the entire probability density
function (PDF), which in nonlinear and/or discrete systems is typically non-Gaussian, and not
only in a few moments, due to the significance of rare/tail events.  For example, it was recently
shown that in catalytic reactions the most kinetically relevant configurations are occurring rarely,
and correspond to overlapping tails of (non-Gaussian) PDFs \cite{Wu:12}. In that latter direction,
there is a broad recent  literature relying on information theory tools,  where  sensitivity is
estimated by using the  Relative Entropy and the   Fisher Information between PDFs, see for instance
\cite{Liu:06,Ludtke:08,Majda:10,Majda:11,Komorowski:11}. In particular,  such methods  were introduced for the 
study of the sensitivity of PDFs to parameters in climate models \cite{Majda:10}; there the PDFs
structure is known  as it is obtained through an entropy maximization subject to constraints.
Knowing the form of the PDF allows to carry out calculations such as obtaining a Fisher
Information  Matrix (FIM), which in turn identifies the most sensitive parameter directions.
On the other hand,   the sensitivity of stochastic dynamics  can be studied  by using the
FIM \cite{Komorowski:11}.  There the authors  are employing a linearization of
the stochastic evolution around the nonlinear mean field equation and as   a result the
form of the PDF is again known, and more precisely it is Gaussian hence the FIM can
be directly computed. Although there are  regimes where this approximation is applicable
(short times, systems with a single steady state, etc.),  for systems with nontrivial long-time
dynamics, e.g. metastable, it is not correct as  large deviation arguments \cite{Doering:07} show,
or even explicitly available formulas for escape times \cite{Hanggi:84}. Similar
issues  with non-gaussianity in the long time dynamics arise in stochastic systems
with strongly intermittent behavior \cite{KMS}.

Some of these challenges will be addressed through the proposed methods which we
present next in the context of kinetic Monte Carlo  models although similar challenges
 and ideas are relevant to all other stochastic molecular simulation methods. For example,
 we discuss in Section~\ref{num:sec}.C the sensitivity of algorithms for the numerical
 integration of Langevin dynamics. Moreover, kinetic Monte Carlo methods
involving surface chemistry are formulated in terms of continuous time  Markov chains
(jump processes) on a spatial lattice domain  $\LATT$:   at each lattice site $x\in\LATT$
there is a state space  $\SPINSP=\{0,1,\dots, \NUMSP \}$ describing different chemical
species (interacting particles),  where the simplest case $K=1$ represents the well-known
lattice-gas  model \cite{Liggett:85}. The process $\PROCMICRO$  is defined as  a continuous time
Markov  Chain (CTMC)  on the (high-dimensional) state space $\SIGMA=\SPINSP^{\LATT}$
and  mathematically it is    defined  completely by specifying the local  transition rates
$c^\theta(\sigma, \sigma')$ where $\theta\in\R^k$ is a vector of the model parameters.
The transition rates  determine  the  updates  (jumps)  from any current state $\sigma_t=\sigma$
to a (random) new state $\sigma'$ and  concrete examples  of  spatial physicochemical
models are considered in Section~\ref{num:sec}.D. From the local transition rates one defines
the total rate $\lambda^\theta(\sigma)=\sum_{\sigma'} c^\theta(\sigma, \sigma')$, which
is the intensity of the exponential waiting time for a jump from the state $\sigma$. 
The transition probabilities 
are 
$p^\theta(\sigma, \sigma')=\frac{c^\theta(\sigma, \sigma')}{\lambda^\theta(\sigma)}$.
The basic simulation tool for these lattice jump processes is  kinetic Monte Carlo (KMC)
with a wide range of applications   from crystal growth, to catalysis, to biology,  see for
instance \cite{Chatterjee:07}.

\medskip
\noindent
{\em Relative Entropy Rate\,}:
In simulations of dynamic transitions between metastable states on high-dimensional energy
landscapes or of NESS  we are interested in the sensitivity of  stationary processes, i.e., processes
for which the initial probability distribution  is the stationary one (reached after long-time integration).
Mathematically,  we want to assess the sensitivity  of the CTMC 
$\PROCS$  with local transition rates $c^\theta(\sigma, \sigma')$ to a perturbation $\epsilon\in\R^k$
in the parameter vector $\theta$  giving rise to a process $\PROCSAPP$ with local transition rates 
$c^{\theta+\epsilon}(\sigma, \sigma')$, when the initial data are sampled from the respective stationary
probability distribution. The error analysis in the context of the long-time behavior is developed in
terms of the {\em relative entropy},  
\begin{equation}
\label{relent0}
\RELENT{{\PATHS}}{{\PATHSAPP}}=
\int \log\left(\frac{d{\PATHS}}{d{\PATHSAPP}}\right)\, d\PATHS\, ,
\end{equation}
where $\PATHS$ (resp. $\PATHSAPP$) is the {\em path space  probability measures}
of $\PROCS$ (resp. $\PROCSAPP$) in the time interval $[0, T]$. In the case these
probability measures have corresponding probability densities $q^\theta$ and
$q^{\theta+\epsilon}$, \VIZ{relent0} becomes 
 $\RELENT{{\PATHS}}{{\PATHSAPP}}=
\int q^\theta\log\left(\frac{q^\theta}{q^{\theta+\epsilon}}\right)$.
 A key property of the   relative entropy $\RELENT{P}{Q}$ 
is that $\RELENT{P}{Q} \ge 0$ with equality if and only if $P=Q$, 
which  allows us to view relative entropy  as a ``distance" (more precisely a semi-metric)  
between two probability measures $P$ and $Q$. Moreover, from an information theory
perspective \cite{Cover:91},  the relative entropy measures {\it loss/change of information},
e.g. in our context  for  the process $\PROCS$ associated with the parameter vector $\theta$,
with respect to the process $\PROCSAPP$ associated with the parameter vector
$\theta+\epsilon$. Relative entropy for high-dimensional systems was used as measure
of loss of information in coarse-graining\cite{Kats:Trashorras:06, Kats:ReyBellet:07,
Arnst:08}, and sensitivity analysis for climate modeling problems \cite{Majda:10}.

Starting from \VIZ{relent0}, by Girsanov's formula we obtain an explicit expression for the
corresponding Radon-Nikodym derivative 
{\small
\begin{equation}
\begin{aligned}\label{RN1}
\frac{d\PATHS}{d\PATHSAPP}  (\{\sigma_t\}) &=
\exp \Big\{ \sum_{s \leq T} \log \frac{\lambda^\theta(\sigma_{s-}) p^\theta(\sigma_{s-},\sigma_s)}{\lambda^{\theta+\epsilon}(\sigma_{s-}) 
p^{\theta+\epsilon}(\sigma_{s-},\sigma_s)} \\
&- \int_{0}^{T} [\lambda^\theta(\sigma_s) - \lambda^{\theta+\epsilon}(\sigma_s)]\,ds \Big\}\COMMA
\end{aligned}
\end{equation}
}\hspace{-2mm} on any path of the process $\{\sigma_t\}_{t \in [0,T]}$ in terms of the jump rates and
transition probabilities of both process, under suitable non-degeneracy conditions
\cite{Kipnis:99}. Notice that $\sigma_{s-}$ denotes the left-hand limit of $\sigma_s$ at a jump instance $s$.
Following   calculations regarding the related quantity of entropy production in non-equilibrium 
statistical mechanics \cite{Maes:00}, 
we can show  that when the initial distribution
$\sigma_0 \sim \EQUIL$ where $\EQUIL$ (resp. $\EQUILAPP$) is the
stationary probability disturbution   of $\PROCS$ (resp. $\PROCSAPP$), then the relative entropy
formula simplifies  dramatically in two parts, one pure equilibrium (scaling as
$\BIGO(1)$) and one  capturing the stationary dynamics (scaling as $\BIGO(T)$):
{\small
\begin{equation}\label{relent1}
   \RELENT{{\PATHS}}{{\PATHSAPP}}
    =T \RELENTR{{\PATHS}}{{\PATHSAPP}}+\RELENT{\EQUIL}{\EQUILAPP},
\end{equation}
}\hspace{-2mm} where $\RELENT{\EQUIL}{\EQUILAPP}$  is the relative entropy between
the stationary probabilities, while  
{\small
\begin{equation}
\begin{aligned}\label{relent2}
&\RELENTR{{\PATHS}}{{\PATHSAPP}} = \mathbb{E}_{\mu^\theta} \Big[
\sum_{\sigma'} \lambda^\theta(\sigma) p^\theta(\sigma, \sigma') \\
&\times \log \frac{\lambda^\theta(\sigma) p^\theta(\sigma, \sigma')}{\lambda^{\theta+\epsilon}(\sigma) p^{\theta+\epsilon}(\sigma, \sigma')}
- (\lambda^\theta(\sigma) - \lambda^{\theta+\epsilon}(\sigma))\Big]\COMMA
\end{aligned}
\end{equation}
}\hspace{-2mm} where $\mathbb{E}_{\mu^\theta}$ denotes the expected value with respect
to the probability $\mu^\theta$. In \VIZ{relent1}, we immediately notice that 
a most relevant quantity to describe the change  of information content upon perturbation
of model parameters  of a stochastic process is the $\BIGO(T)$ term, which can be thought
as a {\em relative entropy per  unit time} while on the other hand,  the term
$\RELENT{\EQUIL}{\EQUILAPP}$ becomes unimportant as $T$ grows.

We will refer from now on to the quantity \VIZ{relent2} as the {\em Relative Entropy Rate}
(RER), which can be thought as the change in information per unit time. Notice that RER
has the correct time scaling since it is actually independent of the interval $[0,T]$. Furthermore,
\VIZ{relent2} provides a computable observable that can be sampled from the steady
state $\mu^\theta$ in terms of conventional  KMC,  bypassing  the need for a histogram
or an explicit formula for the high-dimensional probabilities involved in \VIZ{relent0}.
Finally, the fact that in stationary  regimes,  when $T\gg 1$  in \VIZ{relent1},
the term $\RELENT{\mu^\theta}{\mu^{\theta+\epsilon}}$ becomes unimportant, is
especially convenient: $\mu^\theta$ and $\mu^{\theta+\epsilon}$ are typically not known
explicitly in non-reversible systems, for instance  in  spatially distributed reaction  KMC
or non-reversible Langevin dynamics considered here as  examples.

\medskip
\noindent
{\em Fisher Information Matrix on Path Space\,}:
 An attractive approach to sensitivity analysis that is rigorously based on relative entropy
 calculations  is the Fisher Information Matrix. Indeed, assuming smoothness in the parameter
 vector, it is straightforward to obtain the expansion of \VIZ{relent0} \cite{Cover:91, Abramov:05},
\begin{equation}\label{GFIM0}
\RELENT{{\PATHS}}{{\PATHSAPP}}
= \frac{1}{2} \epsilon^T \FISHER(\PATHS) \epsilon + O(|\epsilon|^3)\, , 
\end{equation}
where the Fisher Information Matrix (FIM) is defined as the Hessian of the relative entropy:
\begin{equation}\label{Fisher}
  \FISHER(\PATHS) =  \left. \nabla_\epsilon^2 \RELENT{{\PATHS}}{{\PATHSAPP}}\right|_{\epsilon=0}
  \PERIOD
\end{equation}
As \VIZ{GFIM0} readily suggests, relative entropy is locally a quadratic function of the parameter
vector $\theta$. Thus spectral analysis of $\FISHER$--provided the matrix is available--would
allow us to identify  which parameter directions are the most/least sensitive to perturbations,
uncertainty  or errors resulting from parameter estimation. The source of such uncertainties could
be related to the assimilation of  experimental data \cite{Rao:10} or  finer scale
numerical simulation, e.g. Density Functional Theory computations  in the case of molecular
simulations \cite{Braatz:06}. More specifically, the knowledge of the Fisher Information Matrix not
only provides a gradient-free method for sensitivity analysis, but  allows   to address questions
of parameter identifiability \cite{Rothenberg:71,Komorowski:11} and optimal experiment
design \cite{Emery:98,Prasad:10}.
However, the FIM  $\FISHER$ in \VIZ{Fisher}  is not accessible computationally in general,
nevertheless analytic calculations can be performed at equilibrium (e.g., in ergodic systems
when $T \to\infty$) under the assumption or the explicit knowledge of the stationary distribution
$\mu$. An example of such a calculation is under the assumption of a Gaussian
distribution with the mean $m(\theta)$ and the covariance matrix $\Sigma(\theta)$ in which
case the matrix $\FISHER$ is computed in terms of derivatives of the mean and covariance
matrix \cite{Komorowski:11}.

On the other hand  \VIZ{relent1} provides  a different perspective to these issues, giving rise 
 to a computable  observable  for the {\em path space  Fisher Information Matrix}  that includes
 transition rates rather than just the stationary  PDFs. Indeed, by combining \VIZ{relent1} and
 \VIZ{GFIM0} we  obtain the following expansion for the dominant, $O(T)$  term in \VIZ{relent1}:
\begin{equation}\label{GFIM}
\RELENTR{\PATHS}{\PATHSAPP} = \frac{1}{2} \epsilon^T \FISHERR(\PATHS) \epsilon + O(|\epsilon|^3)\, , 
\end{equation}
where  the  {\em Fisher Information Matrix} per unit time, $\FISHERR(\PATHS)$,
has the explicit form
\begin{equation}
\begin{aligned}\label{GFIM2}
\FISHERR(\PATHS) &= \mathbb{E}_{\mu^\theta} \Big[ \sum_{\sigma'}c^\theta(\sigma, \sigma') \\
&\times \nabla_\theta \log c^\theta(\sigma, \sigma') \nabla_\theta \log c^\theta(\sigma, \sigma')^T\Big]\, ,
\end{aligned}
\end{equation}
where $c^\theta(\sigma, \sigma')=\lambda^\theta(\sigma)p^\theta(\sigma, \sigma')$.
Fisher Information Matrices given by \VIZ{Fisher} and \VIZ{GFIM2} are straightforwardly
related through $\lim_{T\rightarrow\infty} \frac{1}{T} \FISHER = \FISHERR$.
It is clear from \VIZ{GFIM2} that the Fisher Information Matrix, just like the Relative Entropy
Rate \VIZ{relent2},   is merely an observable that can be sampled using  KMC algorithms.

The previous discussion suggests that  the proposed approach to sensitivity analysis is expected to have  the following features: 
\begin{enumerate}
\item It is rigorously valid for the sensitivity of  long-time, stationary  dynamics in path space, including
for example metastable dynamics in a complex landscape.

\item It is a gradient-free sensitivity analysis method which does not require the knowledge
of the equilibrium PDFs, as \VIZ{Fisher} is replaced with a computable observable
\VIZ{GFIM2}, that contains explicitly information for local dynamics.

\item It is suitable for non-equilibrium systems  from a statistical mechanics perspective; for example,
non-reversible processes, such as spatially extended reaction-diffusion Kinetic Monte
Carlo, where the structure of the equilibrium PDF is unknown and is typically non-Gaussian.

\item A key enabling tool for implementing the proposed methodology in high-dimensional
stochastic systems is molecular simulation methods such as KMC or Langevin solvers which
can sample the observables \VIZ{relent2} and \VIZ{GFIM2}, and in particular  their accelerated
or scalable  versions \cite{Gillespie:01, Chatterjee:07, SPPARKS,  LAMMPS, Arampatzis:12}.
\end{enumerate}
Indeed, we demonstrate these features by presenting three examples addressing different points: (a) the well-mixed bistable
reaction system known as the Schl\"ogl model which also serves as a benchmark;  (b) a Langevin particle system with either reversible
or non-reversible forcing, that demonstrates the ability of the proposed method to carry out sensitivity analysis in non-equilibrium systems; and,
 (c) a spatially extended  KMC model for $CO$ oxidation known
as the Ziff-Gulari-Barshad (ZGB) model. Such reaction-diffusion models are typically
non-reversible, hence the sensitivity tools we propose here are highly suitable. Regarding this last class of problems, we note 
that in  more accurate, state-of-the-art    KMC models  with a large number of parameters \cite{Hansen:00,Meskine:09,Stamatakis:11},
kinetic parameters are estimated through density functional theory (DFT) calculations,
hence sensitivity analysis is a crucial step in determining the parameters that need to be
calculated with greater accuracy.

The paper is organized as follows: in Section~\ref{Markov:chain} we present the derivation
of the Relative Entropy Rate and its corresponding Fisher Information Matrix for discrete-time
Markov chains while Section~\ref{Markov:processes} the same observables for continuous-time
Markov processes (i.e., \VIZ{relent1}, \VIZ{relent2} and \VIZ{GFIM2}) are derived.
Section~\ref{Markov:generalizations} generalizes the RER and the FIM to time-periodic,
inhomogeneous Markov processes as well as to semi-Markov processes. Statistical estimators and numerical
examples in Section~\ref{num:sec} demonstrate the efficiency of the proposed sensitivity
method, while Section~\ref{concl} concludes the paper.

\section{Discrete Time Markov Chains}
\label{Markov:chain}
Let $\{\sigma_m\}_{m\in\mathbb Z^+}$ be a discrete-time time-homogeneous Markov chain with
separable state space $E$. The transition probability kernel of the Markov chain denoted
by  $P^\theta(\sigma, d\sigma')$ depends on the parameter vector $\theta\in\mathbb R^k$.
Assume that the transition kernel is absolute continuous with respect to (w.r.t.) the Lebesgue
measure\cite{Note1} and the transition probability density function $p^\theta(\sigma,\sigma')$
is always positive for all $\sigma,\sigma'\in E$ and for all $\theta\in\mathbb R^k$. We further assume that
$\{\sigma_m\}_{m\in\mathbb Z^+}$ has a unique stationary probability distribution  denoted
by $\mu^\theta(\sigma)$.
Exploiting the Markov
property, the path probability distribution $Q_{0,M}^{\theta}$ for the path $\{\sigma_m\}_{m=0}^M$
at the time horizon $0,...,M$ starting  from the stationary distribution 
 $\mu^\theta(\sigma_0)$
is given by
\begin{equation}
Q_{0,M}^{\theta}\big(\sigma_0,\cdot\cdot\cdot, \sigma_M \big) = \mu^\theta(\sigma_0)
p^\theta(\sigma_0,\sigma_1) \cdot\cdot\cdot p^\theta(\sigma_{M-1},\sigma_M)\, .
\end{equation}
We consider the perturbation by $\epsilon\in\mathbb R^k$ and the Markov chain
$\{\widetilde{\sigma}_m\}_{m\in\mathbb Z^+}$ with the respective transition probability
density function, $p^{\theta+\epsilon}(\sigma,\sigma')$, the
respective stationary density, $\mu^{\theta+\epsilon}(\sigma)$, as well as the respective
path distribution $Q_{0,M}^{\theta+\epsilon}$. Then, the Radon-Nikodym derivative
of the unperturbed path distribution w.r.t. the perturbed path distribution takes the form
\begin{equation} \label{RNderiv:chain}
\frac{dQ_{0,M}^\theta}{d Q_{0,M}^{\theta+\epsilon}}\big( \{\sigma_m\} \big) =
\frac{\mu^{\theta}(\sigma_0)\prod_{i=0}^{M-1}p^{\theta}(\sigma_i,\sigma_{i+1})}
{\mu^{\theta+\epsilon}(\sigma_0)\prod_{i=0}^{M-1}p^{\theta+\epsilon}(\sigma_i,\sigma_{i+1})}\, ,
\end{equation}
which is well-defined since the transition probabilities are assumed always positive.

The following Proposition demonstrates the relative entropy representation of
the path distribution $Q_{0,M}^{\theta}$ w.r.t. the path distribution
$Q_{0,M}^{\theta+\epsilon}$. 

\begin{proposition}\label{propos:MC}
Under the previous assumptions, the path space relative entropy $\RELENT{Q_{0,M}^{\theta}}{Q_{0,M}^{\theta+\epsilon}} :=
\int \log\left(\frac{d{Q_{0,M}^{\theta}}}{d{Q_{0,M}^{\theta+\epsilon}}}\right) dQ_{0,M}^{\theta}$
equals to
\begin{equation}
\RELENT{Q_{0,M}^{\theta}}{Q_{0,M}^{\theta+\epsilon}} = M \RELENTR{Q_{0,M}^{\theta}}{Q_{0,M}^{\theta+\epsilon}}
+ \RELENT{\mu^\theta}{\mu^{\theta+\epsilon}}
\end{equation}
where
\begin{equation} \label{RER:MC}
\RELENTR{Q_{0,M}^{\theta}}{Q_{0,M}^{\theta+\epsilon}}
= \mathbb E_{\mu^\theta} \left[ \int_E p^{\theta}(\sigma,\sigma')
\log\frac{p^{\theta}(\sigma,\sigma')}{p^{\theta+\epsilon}(\sigma,\sigma')} d\,\sigma' \right]
\end{equation}
is the relative entropy rate.
\end{proposition}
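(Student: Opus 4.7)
The plan is to start from the explicit Radon--Nikodym derivative \VIZ{RNderiv:chain}, take its logarithm, and compute the expectation term by term using the Markov property and stationarity. Taking $\log$ of \VIZ{RNderiv:chain} yields
\begin{equation*}
\log\frac{dQ_{0,M}^{\theta}}{dQ_{0,M}^{\theta+\epsilon}}(\{\sigma_m\})
= \log\frac{\mu^{\theta}(\sigma_0)}{\mu^{\theta+\epsilon}(\sigma_0)}
+ \sum_{i=0}^{M-1}\log\frac{p^{\theta}(\sigma_i,\sigma_{i+1})}{p^{\theta+\epsilon}(\sigma_i,\sigma_{i+1})}\PERIOD
\end{equation*}
Integrating against $Q_{0,M}^{\theta}$ then reduces the problem to evaluating $M+1$ separate expectations.

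The first expectation, involving only the coordinate $\sigma_0$, is handled by observing that the marginal of $Q_{0,M}^\theta$ at time $0$ is exactly $\mu^\theta$, so this contribution equals $\RELENT{\mu^\theta}{\mu^{\theta+\epsilon}}$ by definition. For each of the $M$ remaining summands, I would marginalize out all coordinates other than $(\sigma_i,\sigma_{i+1})$; by the Markov property the joint law of $(\sigma_i,\sigma_{i+1})$ under $Q_{0,M}^\theta$ factorizes as (law of $\sigma_i$) $\times\, p^\theta(\sigma_i,\sigma_{i+1})$. The key step is then stationarity: since $\sigma_0\sim\mu^\theta$ and $\mu^\theta$ is invariant under $p^\theta$, the marginal law of $\sigma_i$ is again $\mu^\theta$ for every $i=0,\dots,M-1$. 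Hence each of the $M$ summands equals
\begin{equation*}
\mathbb{E}_{\mu^\theta}\!\left[\int_E p^\theta(\sigma,\sigma')\log\frac{p^\theta(\sigma,\sigma')}{p^{\theta+\epsilon}(\sigma,\sigma')}\,d\sigma'\right],
\end{equation*}
i.e., the quantity denoted $\RELENTR{Q_{0,M}^{\theta}}{Q_{0,M}^{\theta+\epsilon}}$ in \VIZ{RER:MC}. Summing over $i$ produces the prefactor $M$ and yields the claimed decomposition.

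There is no deep obstacle; the two mildly technical points to be careful about are (i) justifying interchange of sum/integral so that the expectation distributes over the telescoping logarithm (guaranteed here because the transition density is strictly positive and each term has a definite sign once the relative-entropy integrand is grouped appropriately, or alternatively by assuming enough integrability to invoke Fubini), and (ii) making the invariance argument clean, namely noting that under the stated absolute-continuity and positivity hypotheses the identity $\int \mu^\theta(\sigma)p^\theta(\sigma,\sigma')\,d\sigma = \mu^\theta(\sigma')$ holds pointwise and can be iterated. Once these are in place the result is essentially a bookkeeping consequence of the chain rule for relative entropy applied to the Markov factorization, and it parallels exactly the continuous-time decomposition \VIZ{relent1} stated earlier.
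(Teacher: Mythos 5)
Your proposal is correct and follows essentially the same route as the paper: expand the logarithm of the Radon--Nikodym derivative \VIZ{RNderiv:chain} into $M+1$ terms, then use the normalization $\int_E p^\theta(\sigma,\sigma')\,d\sigma'=1$ together with the invariance $\int_E \mu^\theta(\sigma)p^\theta(\sigma,\sigma')\,d\sigma=\mu^\theta(\sigma')$ to reduce each summand to the common expression \VIZ{RER:MC}. The marginalization-plus-stationarity argument you describe is precisely the paper's computation, so no further changes are needed.
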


\begin{proof}
The path space relative entropy equals to {\small
\begin{equation*}
\begin{aligned}
&\RELENT{{Q_{0,M}^{\theta}}}{{Q_{0,M}^{\theta+\epsilon}}} 
= \int_{E}\cdot\cdot\cdot\int_{E} \mu^{\theta}(\sigma_0)\prod_{i=0}^{M-1}p^{\theta}(\sigma_i,\sigma_{i+1}) \\
&\times\log \frac{\mu^{\theta}(\sigma_0)\prod_{i=0}^{M-1}p^{\theta}(\sigma_i,\sigma_{i+1})}
{\mu^{\theta+\epsilon}(\sigma_0)\prod_{i=0}^{M-1}p^{\theta+\epsilon}(\sigma_i,\sigma_{i+1})}
d\sigma_0\cdot\cdot\cdot d\sigma_M \\
&= \int_{E}\cdot\cdot\cdot\int_{E} \mu^{\theta}(\sigma_0)\prod_{i=0}^{M-1}p^{\theta}(\sigma_i,\sigma_{i+1})
 \left( \log\frac{\mu^{\theta}(\sigma_0)}{\mu^{\theta+\epsilon}(\sigma_0)} \right. \\ & \left.
+ \sum_{i=0}^{M-1} \log\frac{p^{\theta}(\sigma_i,\sigma_{i+1})}{p^{\theta+\epsilon}(\sigma_i,\sigma_{i+1})}
\right)  d\sigma_0\cdot\cdot\cdot d\sigma_M
\end{aligned}
\end{equation*}}

Using the relations {\small
\begin{equation*}
\int_E p(\sigma,\sigma') d\sigma' = 1
\ \ \ \ \ \ \ \ \ \ \& \ \ \ \ \ \ \ \ \ \ 
\int_E \mu(\sigma) p(\sigma,\sigma') d\sigma = \mu(\sigma')
\end{equation*}
}the relative entropy is simplified to {\small
\begin{equation*}
\begin{aligned}
&\RELENT{{Q_{0,M}^{\theta}}}{{Q_{0,M}^{\theta+\epsilon}}} = 
\int_{E} \mu^{\theta}(\sigma_0) \log\frac{\mu^{\theta}(\sigma_0)}{\mu^{\theta+\epsilon}(\sigma_0)} d\sigma_0 \\
&+ \sum_{i=0}^{M-1} \int_{E}\int_{E} \mu^{\theta}(\sigma_i)p^{\theta}(\sigma_i,\sigma_{i+1})
\log\frac{p^{\theta}(\sigma_i,\sigma_{i+1})}{p^{\theta+\epsilon}(\sigma_i,\sigma_{i+1})} d\sigma_i d\sigma_{i+1} \\
&= M \RELENTR{{Q_{0,M}^{\theta}}}{{Q_{0,M}^{\theta+\epsilon}}} + \RELENT{\mu^{\theta}}{\mu^{\theta+\epsilon}}
\end{aligned}
\end{equation*}
}\end{proof}

For large times ($M\gg 1$), the significant term of the relative entropy,
$\RELENT{{Q_{0,M}^{\theta}}}{{Q_{0,M}^{\theta+\epsilon}}}$,
is the relative entropy rate, $\RELENTR{{Q_{0,M}^{\theta}}}{{Q_{0,M}^{\theta+\epsilon}}}$,
which scales linearly with the number of jumps of the Markov chain
while the relative entropy between the stationary probability distributions, $\RELENT{\mu^\theta}{\mu^{\theta+\epsilon}}$,
becomes unimportant.
Thus, at the stationary regime, the appropriate observable for sensitivity analysis
is the relative entropy rate. Furthermore, the RER expression \VIZ{RER:MC} incorporates
the transition probabilities of the Markov chain which are typically known --for instance,
whenever a path sample is needed to be generated-- while the respective stationary probability distributions
are typically unknown --for instance, in non-reversible systems-- and should
be computed numerically, if possible. Moreover,
the  path-space RER takes into consideration the dynamical aspects
of the process while the relative entropy between the stationary distributions does
not take into account any dynamical aspects of the process which are critical in
metastable or intermittent regimes.

\medskip
\noindent
{\em Fisher Information Matrix for Relative entropy rate\,}: The relative entropy rate
is locally a quadratic functional in a neighborhood of $\theta$. The curvature of
the RER around $\theta$, defined by its Hessian, is called the Fisher Information
Matrix which is formally derived as follows. 
Let $\delta p(\sigma,\sigma') := p^{\theta+\epsilon}(\sigma,\sigma') - p^\theta(\sigma,\sigma')$,
then the relative entropy rate of $Q_{0,M}^{\theta}$ w.r.t. $Q_{0,M}^{\theta+\epsilon}$
is written as {\small
\begin{equation*}
\begin{aligned}
&\RELENTR{{Q_{0,M}^{\theta}}}{{Q_{0,M}^{\theta+\epsilon}}} \\
&= - \int_E\int_E \mu^\theta(\sigma) p^\theta(\sigma,\sigma') \log \left(1+ \frac{\delta p(\sigma,\sigma')}{p^\theta(\sigma,\sigma')}\right) d\sigma d\sigma' \\
&= - \int_E\int_E \left[\mu^\theta(\sigma)\delta p(\sigma,\sigma') \right. \\
&\left.-  \frac{1}{2}\mu^\theta(\sigma)\frac{\delta p(\sigma,\sigma')^2}{p^\theta(\sigma,\sigma')}
+ O(|\delta p(\sigma,\sigma')|^3) \right] d\sigma d\sigma' \, .
\end{aligned}
\end{equation*}
} Moreover, for all $\sigma\in E$, it holds that {\small
\begin{equation*}
\int_E \delta p(\sigma,\sigma')d\sigma' = \int_E p^{\theta+\epsilon}(\sigma,\sigma')d\sigma'
- \int_E p^{\theta}(\sigma,\sigma')d\sigma' = 1-1 = 0
\end{equation*}
} \hspace{-3mm} while a under smoothness assumption on the transition probability function
for the parameter $\theta$, which is an easily checkable assumption, a
Taylor series expansion is applicable to $\delta p$: 
\begin{equation*}
\delta p(\sigma,\sigma') = \epsilon^T \nabla_\theta p^\theta(\sigma,\sigma') + O(|\epsilon|^2)
\end{equation*}
Thus, we finally obtain that {\small
\begin{equation*}
\begin{aligned}
&\RELENTR{{Q_{0,M}^{\theta}}}{{Q_{0,M}^{\theta+\epsilon}}} \\
&= \frac{1}{2} \int_E\int_E  \mu^\theta(\sigma) \frac{(\epsilon^T \nabla_\theta p^\theta(\sigma,\sigma'))^2}
{p^\theta(\sigma,\sigma')} d\sigma d\sigma' + O(|\epsilon|^3) \\
&= \frac{1}{2}\epsilon^T\Big( \int_E\int_E \mu^\theta(\sigma) p^\theta(\sigma,\sigma) \nabla_\theta \log p^\theta(\sigma,\sigma') \\
&\ \ \ \ \ \ \ \ \ \ \ \times \nabla_\theta \log p^\theta(\sigma,\sigma')^T d\sigma d\sigma' \Big) \epsilon + O(|\epsilon|^3) \\
&= \frac{1}{2} \epsilon^T \FISHERR\big({{Q_{0,M}^{\theta}}}\big) \epsilon + O(|\epsilon|^3)
\end{aligned}
\end{equation*}}
where {\small
\begin{equation}
\begin{aligned}\label{FIM:MC}
&\FISHERR\big({{Q_{0,M}^{\theta}}}\big) := \\
&\mathbb E_{\mu^\theta} \left[\int_E p^\theta(\sigma,\sigma) \nabla_\theta
\log p^\theta(\sigma,\sigma') \nabla_\theta \log p^\theta(\sigma,\sigma')^T d\,\sigma'\right]
\end{aligned}
\end{equation}
} is the path space Fisher Information Matrix (FIM) for the relative entropy rate.
Notice that FIM as well as RER are computed from the transition probabilities
under mild ergodic average assumptions
(see also Section~\ref{num:sec} where explicit numerical formulas are provided).

\begin{remark}
The Fisher information Matrix for $\RELENTR{{Q_{0,M}^{\theta+\epsilon}}}{{Q_{0,M}^{\theta}}}$
is again $\FISHERR\big({{Q_{0,M}^{\theta}}}\big)$ while the relative entropy rates
are related for small $\epsilon$ through {\small
\begin{equation}
\begin{aligned}
&\RELENTR{{Q_{0,M}^{\theta+\epsilon}}}{{Q_{0,M}^{\theta}}} = \RELENTR{{Q_{0,M}^{\theta}}}{{Q_{0,M}^{\theta+\epsilon}}} + O(|\epsilon|^3) \\
&= \RELENTR{{Q_{0,M}^{\theta}}}{{Q_{0,M}^{\theta-\epsilon}}} + O(|\epsilon|^3) \, .
\end{aligned}
\end{equation}}
\end{remark}

\begin{remark}
If the transition probability function of the Markov chain equals to
$p^\theta(\sigma,\sigma') = \mu^\theta(\sigma')$ for all $\sigma,\sigma'\in E$
and for all $\theta\in\mathbb R^k$, which is equivalent to the fact that the samples are
independent, identically distributed from the stationary probability distribution, then the relative entropy
rate between the path probabilities becomes the usual relative entropy between the
stationary distributions and the path space FIM becomes the usual FIM. Indeed, FIM
is simplified to {\small
\begin{equation*}
\begin{aligned}
&\FISHERR\big({{Q_{0,M}^{\theta}}}\big) \\
&= \int_E\int_E \mu^\theta(\sigma) \mu^\theta(\sigma') \nabla_\theta \log \mu^\theta(\sigma') \nabla_\theta \log \mu^\theta(\sigma')^T d\sigma d\sigma' \\
&= \int_E \mu^\theta(\sigma') \nabla_\theta \log \mu^\theta(\sigma') \nabla_\theta \log \mu^\theta(\sigma')^T d\sigma' \\
&=: \FISHER\big(\mu^\theta\big)
\end{aligned}
\end{equation*}
}while we similarly obtain for the relative entropy rate that
$\mathcal H(P_{0t}^\theta | P_{0t}^{\theta+\epsilon}) = \mathcal R(\mu^\theta | \mu^{\theta+\epsilon})$.
\end{remark}

\section{Continuous-time Markov Chains}
\label{Markov:processes}
As in the case of Kinetic Monte Carlo methods, we consider $\{\sigma_t\}_{t\in\mathbb R^+}$
to be a CTMC 
with countable state space $E$. The parameter dependent transition
rates denoted by $c^\theta(\sigma, \sigma')$ completely define the jump Markov process.
The transition rates determine the updates (jumps or
sojourn times) from a current state $\sigma$ to a new (random) state $\sigma'$ through
the total rate $\lambda^\theta(\sigma):=\sum_{\sigma'\in E} c^\theta(\sigma, \sigma')$ which
is the intensity of the exponential waiting time for a jump from state $\sigma$. The
transition probabilities for the embedded Markov chain $\big\{J_n\big\}_{n\geq0}$
are $p^\theta(\sigma, \sigma') = \frac{c^\theta(\sigma, \sigma')}{\lambda^\theta(\sigma)}$
while the generator of the jump Markov process is an operator acting on the bounded
functions (also called observables) $f(\sigma)$ defined on the state space $E$ and fully
determines the process:
\begin{equation}
\mathcal L f(\sigma) = \sum_{\sigma'\in E} c^\theta(\sigma, \sigma')[f(\sigma')-f(\sigma)] \, .
\end{equation}

Assume that a new jump Markov process $\{\widetilde\sigma_t\}_{t\in\mathbb R^+}$ is
defined by perturbing the transition rates by a small vector $\epsilon\in\mathbb R^k$
and that the two path probabilities  $\PATHS$ and $\PATHSAPP$ are absolute continuous
with respect to  each other which is satisfied when $c^{\theta}(\sigma, \sigma')=0$ if and
only if $c^{\theta+\epsilon}(\sigma, \sigma')=0$ holds for all $\sigma,\sigma'\in E$.
Then the Radon-Nikodym derivative of the path distribution $\PATHS$ with respect to  the
path distribution $\PATHSAPP$ has a explicit formula known also as Girsanov formula
\cite{Liptser:77, Kipnis:99} {\small
\begin{equation}
\begin{aligned}\label{RN:MP}
\frac{d\PATHS}{d\PATHSAPP}  (\{\sigma_t\}) &= \frac{\EQUIL(\sigma_0)}{\EQUILAPP(\sigma_0)}
\exp \left\{ \int_0^T \log \frac{c^\theta(\sigma_{s-},\sigma_s)}{c^{\theta+\epsilon}(\sigma_{s-},\sigma_s)}dN_s \right. \\
&\left. - \int_{0}^{T} [\lambda^\theta(\sigma_s) - \lambda^{\theta+\epsilon}(\sigma_s)]\,ds  \right\}\COMMA
\end{aligned}
\end{equation}
}where $\EQUIL$ (reps. $\EQUILAPP$) is the stationary distributions  of $\{\sigma_t\}_{t\in\mathbb R^+}$
(resp. $\{\widetilde\sigma_t\}_{t\in\mathbb R^+}$) while $N_s$ is the counting (of the jumps) measure.
Having the Girsanov formula, the relative entropy is easily derived as the next Proposition reveals.

\begin{proposition}\label{propos:MP}
Under the previous assumptions, the path space relative entropy $\RELENT{{\PATHSAPP}}{{\PATHS}}$
equals to
\begin{equation}\label{relent:MP1}
\RELENT{{\PATHS}}{{\PATHSAPP}} = T \RELENTR{{\PATHS}}{{\PATHSAPP}}+\RELENT{\EQUIL}{\EQUILAPP}\COMMA
\end{equation}
where { \small
\begin{equation}
\begin{aligned}\label{RER:MP}
\RELENTR{{\PATHS}}{{\PATHSAPP}}
= \mathbb E_{\mu^\theta} \Big[& \sum_{\sigma'\in E} c^{\theta}(\sigma, \sigma')
\log \frac{c^\theta(\sigma, \sigma')}{c^{\theta+\epsilon}(\sigma, \sigma')} \\
&-(\lambda^{\theta}(\sigma) - \lambda^{\theta+\epsilon}(\sigma)) \Big]
\end{aligned}
\end{equation}
} is the relative entropy rate.
\end{proposition}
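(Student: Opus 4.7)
The plan is to take the logarithm of the Girsanov formula \VIZ{RN:MP} and integrate against $\PATHS$, paralleling the strategy already used in the proof of Proposition \ref{propos:MC}. This produces three additive contributions to $\RELENT{\PATHS}{\PATHSAPP}$, which I will identify with the two terms on the right-hand side of \VIZ{relent:MP1}.

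The first contribution, $\mathbb{E}_{\PATHS}[\log(\EQUIL(\sigma_0)/\EQUILAPP(\sigma_0))]$, equals $\RELENT{\EQUIL}{\EQUILAPP}$ since $\sigma_0 \sim \EQUIL$ by the stationarity hypothesis. The third contribution, $-\mathbb{E}_{\PATHS}\!\int_0^T[\lambda^\theta(\sigma_s)-\lambda^{\theta+\epsilon}(\sigma_s)]\,ds$, is handled by Fubini together with stationarity (which gives $\sigma_s \sim \EQUIL$ for every $s\in[0,T]$), producing $-T\,\mathbb{E}_{\EQUIL}[\lambda^\theta(\sigma)-\lambda^{\theta+\epsilon}(\sigma)]$, which matches the second line of \VIZ{RER:MP}.

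The middle piece is the main step: the jump integral $\mathbb{E}_{\PATHS}\!\int_0^T \log[c^\theta(\sigma_{s-},\sigma_s)/c^{\theta+\epsilon}(\sigma_{s-},\sigma_s)]\,dN_s$ must be rewritten using the predictable compensator of the counting process $N_s$ under $\PATHS$. For any predictable integrand $\phi(\sigma_{s-},\sigma_s)$ with appropriate integrability, one has the compensator identity
\begin{equation*}
\begin{aligned}
&\mathbb{E}_{\PATHS}\!\left[\int_0^T \phi(\sigma_{s-},\sigma_s)\,dN_s\right] \\
&\quad = \mathbb{E}_{\PATHS}\!\left[\int_0^T \sum_{\sigma'\in E} c^\theta(\sigma_s,\sigma')\,\phi(\sigma_s,\sigma')\,ds\right].
\end{aligned}
\end{equation*}
Applying it with $\phi=\log(c^\theta/c^{\theta+\epsilon})$ and using stationarity once more to replace the time-integrand by its $\EQUIL$-expectation yields $T\cdot\mathbb{E}_{\EQUIL}[\sum_{\sigma'} c^\theta(\sigma,\sigma')\log(c^\theta(\sigma,\sigma')/c^{\theta+\epsilon}(\sigma,\sigma'))]$, which is precisely the first line of \VIZ{RER:MP}. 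Summing the three contributions then gives \VIZ{relent:MP1}.

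The principal technical obstacle is making the compensator identity rigorous: the compensated process $\int_0^t \phi(\sigma_{s-},\sigma_s)\,dN_s - \int_0^t \sum_{\sigma'} c^\theta(\sigma_s,\sigma')\phi(\sigma_s,\sigma')\,ds$ must be a genuine $\PATHS$-martingale rather than merely a local one, which requires an integrability hypothesis such as $\mathbb{E}_{\EQUIL}[\sum_{\sigma'} c^\theta(\sigma,\sigma')\,|\log(c^\theta(\sigma,\sigma')/c^{\theta+\epsilon}(\sigma,\sigma'))|]<\infty$. The non-degeneracy assumption $c^\theta(\sigma,\sigma')=0\Leftrightarrow c^{\theta+\epsilon}(\sigma,\sigma')=0$ guarantees the logarithm is defined wherever it is integrated, and these mild ergodic/integrability conditions (analogous to those implicit in Proposition \ref{propos:MC}) close the argument.
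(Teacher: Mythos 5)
Your proposal follows essentially the same route as the paper's proof: expand the logarithm of the Girsanov formula \VIZ{RN:MP} into three terms, identify the initial-distribution term with $\RELENT{\EQUIL}{\EQUILAPP}$, handle the $\lambda$-integral by Fubini and stationarity, and convert the jump integral via the martingale/compensator property of the counting process before averaging over $\EQUIL$. Your statement of the compensator identity for the marked point process (summing over targets $\sigma'$ weighted by $c^\theta(\sigma,\sigma')$) is in fact a slightly cleaner rendering of the paper's two-step argument, and your explicit integrability caveat is a welcome addition, but the proof is the same.
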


\begin{proof}
The explicit formula for the RER was first given by Dumitrescu \cite{Dumitrescu:88}
for finite state space, though, we reproduce the proof for the sake of completeness.
Using the Girsanov formula, the relative entropy \VIZ{relent:MP1} is rewritten as {\small
\begin{equation*}
\begin{aligned}
&\RELENT{{\PATHS}}{{\PATHSAPP}} \\
&= \mathbb E_{\PATHS} \left[ \log \frac{\mu^{\theta}(\sigma_0)}{\mu^{\theta+\epsilon}(\sigma_0)}
\int_0^T \log \frac{c^\theta(\sigma_{s-},\sigma_s)}{c^{\theta+\epsilon}(\sigma_{s-},\sigma_s)}\,dN_s \right.\\
&\left.- \int_{0}^{T} [\lambda^\theta(\sigma_s) - \lambda^{\theta+\epsilon}(\sigma_s)]\,ds \right] \\
&= \mathbb E_{\PATHS} \left[ \int_0^T \log \frac{c^\theta(\sigma_{s-},\sigma_s)}{c^{\theta+\epsilon}(\sigma_{s-},\sigma_s)}\,dN_s \right] \\
&- \mathbb E_{\PATHS} \left[ \int_{0}^{T} [\lambda^\theta(\sigma_s) - \lambda^{\theta+\epsilon}(\sigma_s)]\,ds \right]
+ \mathbb E_{\PATHS} \left[ \log \frac{\mu^{\theta}(\sigma_0)}{\mu^{\theta+\epsilon}(\sigma_0)} \right]
\end{aligned}
\end{equation*}
}Exploiting the fact that the process $M_t:=N_t-\int_0^t \lambda^{\theta}(\sigma_{s-})d\,s$
is a martingale, we have that {\small
\begin{equation*}
\begin{aligned}
&\mathbb E_{\PATHS} \left[ \int_0^T \log \frac{c^\theta(\sigma_{s-},\sigma_s)}{c^{\theta+\epsilon}(\sigma_{s-},\sigma_s)}\,dN_s \right] \\
&= \mathbb E_{\PATHS} \left[ \int_0^T \lambda^{\theta}(\sigma_{s-})
\log \frac{c^\theta(\sigma_{s-},\sigma_s)}{c^{\theta+\epsilon}(\sigma_{s-},\sigma_s)}\,ds \right] \ .
\end{aligned}
\end{equation*}
}Moreover, changing the order of the integrals and due to the stationarity of the process
$\{\sigma_t\}_{t\in\mathbb R^+}$, the relative entropy is simplified to the following:  {\small
\begin{equation*}
\begin{aligned}
&\RELENT{{\PATHS}}{{\PATHSAPP}} \\
&=\int_{0}^{T} \mathbb E_{\mu^{\theta}}\left[ \sum_{\sigma'\in E} \lambda^{\theta}(\sigma)p^{\theta}(\sigma, \sigma')
\log \frac{c^\theta(\sigma, \sigma')}{c^{\theta+\epsilon}(\sigma, \sigma')}\right]\,ds \\
&- \int_{0}^{T} \mathbb E_{\mu^{\theta}} \left[\lambda^\theta(\sigma) - \lambda^{\theta+\epsilon}(\sigma)\right]\,ds
+ \mathbb E_{\mu^{\theta}} \left[ \log \frac{\mu^{\theta}(\sigma)}{\mu^{\theta+\epsilon}(\sigma)} \right] \\
&= T \RELENTR{{\PATHS}}{{\PATHSAPP}}+\RELENT{\EQUIL}{\EQUILAPP}
\end{aligned}
\end{equation*}
} \end{proof}

\medskip
\noindent
{\em Fisher Information Matrix\,}: Even though not directly evident, relative entropy
rate for the jump Markov processes is locally a quadratic function of the parameter vector
$\theta\in\mathbb R^k$. Hence, Fisher Information Matrix which is defined as the Hessian
of the RER can be derived.
Indeed, defining the rate difference $\delta c(\sigma,\sigma') = c^{\theta+\epsilon}(\sigma,\sigma')
- c^\theta(\sigma,\sigma')$, the relative entropy rate of $\PATHS$ w.r.t. $\PATHSAPP$ equals to {\small
\begin{equation}
\begin{aligned}
&\RELENTR{{\PATHS}}{{\PATHSAPP}} \\
&= - \sum_{\sigma,\sigma'\in E} \mu^\theta(\sigma) c^\theta(\sigma,\sigma')
\log \left(1+ \frac{\delta c(\sigma,\sigma')}{c^\theta(\sigma,\sigma')}\right) \\
&+ \sum_{\sigma,\sigma'\in E} \mu^\theta(\sigma)\delta c(\sigma,\sigma') \\
&= - \sum_{\sigma,\sigma'\in E} \Big[ \mu^\theta(\sigma)\delta c(\sigma,\sigma')
-\frac{1}{2}\mu^\theta(\sigma)\frac{\delta c(\sigma,\sigma')^2}{c^\theta(\sigma,\sigma')} \\
&+ O(|\delta c(\sigma,\sigma')|^3) \Big]
+ \sum_{\sigma,\sigma'\in E} \mu^\theta(\sigma)\delta c(\sigma,\sigma') \\
&= \frac{1}{2}\sum_{\sigma,\sigma'\in E} \mu^\theta(\sigma)\frac{\delta c(\sigma,\sigma')^2}{c^\theta(\sigma,\sigma')}
+ O(|\delta c(\sigma,\sigma')|^3)
\end{aligned}
\end{equation}
} Under a smoothness assumption on the transition rates in a neighborhood of parameter
vector $\theta$, which is also a checkable hypothesis, a Taylor series expansion of
$\delta c(\sigma,\sigma') = \epsilon^T \nabla_\theta c^\theta(\sigma,\sigma') + O(|\epsilon|^2)$
results in {\small
\begin{equation}
\begin{aligned}
&\RELENTR{{\PATHS}}{{\PATHSAPP}} \\
&= \frac{1}{2} \sum_{\sigma,\sigma'\in E} \mu^\theta(\sigma)
\frac{\big(\epsilon^T \nabla_\theta c^\theta(\sigma,\sigma')\big)^2} {c^\theta(\sigma,\sigma')} + O(|\epsilon|^3) \\
&= \frac{1}{2}\epsilon^T\Big( \sum_{\sigma,\sigma'\in E} \mu^\theta(\sigma)
c^\theta(\sigma,\sigma') \nabla_\theta \log c^\theta(\sigma,\sigma') \\
&\ \ \ \ \ \ \ \ \ \ \times\nabla_\theta \log c^\theta(\sigma,\sigma')^T \Big) \epsilon + O(|\epsilon|^3) \\
&= \frac{1}{2} \epsilon^T \FISHERR(\PATHS) \epsilon + O(|\epsilon|^3)
\end{aligned}
\end{equation}
} where  {\small
\begin{equation}
\begin{aligned}\label{FIM:MP}
&\FISHERR(\PATHS) := \\
&\mathbb E_{\mu^{\theta}}\left[ \sum_{\sigma'\in E} c^\theta(\sigma,\sigma')
\nabla_\theta \log c^\theta(\sigma,\sigma') \nabla_\theta \log c^\theta(\sigma,\sigma')^T \right]
\end{aligned}
\end{equation}
} \hspace{-3mm} is the path space Fisher information matrix of a jump Markov process.
It is based on the transition rates of the process which are typically known
---they actually define the process--- thus FIM as well as RER are numerically
computable under mild ergodicity assumptions.
Furthermore, it is noteworthy that the only difference between the FIM of  the
Markov chains in the previous Section and the FIM of the continuous-time jump
Markov processes is that in the latter the transition rates $c^\theta(\sigma,\sigma')$
are employed instead of the transition probabilities $p^\theta(\sigma,\sigma')$.

\section{Further Generalizations}
\label{Markov:generalizations}
The two previous Sections cover the cases of time-homogeneous Markov chains and
pure jump Markov processes. The key observable for the parameter sensitivity evaluation
is the Relative Entropy Rate which is the time average of the path space relative entropy
as time goes to infinity:
\begin{equation}
\RELENTR{{\PATHS}}{{\PATHSAPP}} = \lim_{T\rightarrow\infty} \frac{1}{T} \RELENT{{\PATHS}}{{\PATHSAPP}} \, .
\end{equation}
Additionally, RER has an explicit formula in both cases making it computationally
tractable as we practically demonstrate in Section~\ref{num:sec}.
Thus, if there are more general stochastic processes which also have
an explicit formula for the RER, Fisher Information Matrix can be defined analogously
and gradient-free sensitivity analysis is also doable. Next, we present two families
of stochastic processes which have known RER.

\medskip
\noindent
{\em Time-periodic Markov Processes\,}: Such Markov processes are typically
utilized to describe circular physical or biological phenomena such as annual climate
models or daily behavior of mammals. 
Even though more general classes of processes can be presented, we restrict to
the discrete-time Markov chains with finite state space $E$. The time-inhomogeneous
transition probability matrix is denoted by $p(\sigma,\sigma';m)$ and the periodicity
implies that $p(\sigma,\sigma';m)=p(\sigma,\sigma';k\zeta+m),\ \forall k\in\mathbb Z^+$
where $\zeta$ is the period. Assume that for all $m=0,...,\zeta-1$ the process
$\{\sigma_{k\zeta+m}\}_{k=0}^\infty$ which is a Markov chain has a unique
stationary distribution  $\mu(x,m)$. Then the Markov process $\big\{\sigma_m\big\}_{m\in\mathbb Z^+}$
at steady state regime is periodically stationary with periodic stationary distribution
$\mu$.

In terms of sensitivity analysis, the relative entropy rate between the path
probabilities  has the explicit formula {\small
\begin{equation}
\begin{aligned}
&\RELENTR{{Q_{0,M}^{\theta}}}{{Q_{0,M}^{\theta+\epsilon}}}
= \frac{1}{\zeta} \sum_{m=0}^{\zeta-1} \sum_{\sigma,\sigma'\in E} \mu^{\theta}(\sigma,\zeta) \\
&\ \ \ \ \ \ \ \ \ \ \ \ \times p^{\theta}(\sigma,\sigma';m) \log\frac{p^{\theta}(\sigma,\sigma';m)}{p^{\theta+\epsilon}(\sigma,\sigma';m)} \\
&= \frac{1}{\zeta} \mathbb E_{\mu^\theta} \left[\sum_{m=0}^{\zeta-1} \sum_{\sigma'\in E} 
p^{\theta}(\sigma,\sigma';m) \log\frac{p^{\theta}(\sigma,\sigma';m)}{p^{\theta+\epsilon}(\sigma,\sigma';m)} \right]\, .
\end{aligned}
\end{equation}
} Similar to the previous cases, a generalized formula for the path-space FIM can be
derived. It is given by {\small
\begin{equation}
\begin{aligned}
\FISHERR({Q_{0,M}^{\theta}}) &:= \frac{1}{\zeta} \sum_{m=0}^{\zeta-1} \sum_{\sigma,\sigma'\in E}
\mu^{\theta}(\sigma,\zeta) p^{\theta}(\sigma,\sigma';m) \\
&\times \nabla_\theta \log p^\theta(\sigma,\sigma';m) \nabla_\theta \log p^\theta(\sigma,\sigma';m)^T \, .
\end{aligned}
\end{equation}
} Existence of the relative entropy rate for general time-inhomogeneous Markov chains
can also be found \cite{Wen:96}.

\medskip
\noindent
{\em Semi-Markov Processes\,}:
These processes generalize the jump Markov processes as well as the
renewal processes to the case where
the future evolution (i.e., waiting times and transition probabilities) depends on
the present state and on the time elapsed since the last transition. Semi-Markov
processes have been extensively used to describe reliability models \cite{Limnios:01},
modeling earthquakes \cite{Lutz:93}, queuing theory \cite{Janssen:06}, etc.
In order to define a semi-Markov process the definition of a semi-Markov transition
kernel as well as its corresponding renewal process is required. Let $E$ be a countable state space
then the process $\{J_n,S_n\}_{n\in\mathbb Z^+}$ is a renewal Markov process
with semi-Markov transition kernel $q(\sigma,\sigma';t)\ \sigma,\sigma'\in E,\ t\in\mathbb R^+$ if {\small
\begin{equation}
\begin{aligned}
&\mathbb P\{J_{n+1}=\sigma',S_{n+1}-S_n<t | J_n=\sigma,...,J_0,S_n,...,S_0) \} \\
&=\mathbb P\{(J_{n+1}=\sigma',S_{n+1}-S_n<t | J_n = \sigma \} := q(\sigma, \sigma';t) \, .
\end{aligned}
\end{equation}
} The process $J_n$ is a Markov chain with transition probability matrix elements
$p(\sigma,\sigma')=\lim_{t\rightarrow\infty} q(\sigma,\sigma',t)$ while the process
$S_n$ is the sequence of jump times.
Let $N_t,\ t\in\mathbb R^+$ defined by $N_t = \sup\{n\geq0 : S_n<t$ be the
counting process of the jumps in the interval $(0,t]$. Then the stochastic process
$Z_t,\ t\in\mathbb R^+$ defined by $Z_t = J_{N_t}$ for $t\geq0$ (or $J_n = Z(S_n)$
for $n\geq0$) is the semi-Markov process associated with $(J_n,S_n)$.

Assume further that the (embedded) Markov chain $J_n$ has a stationary distribution
denoted by $\mu$ as well that the mean sojourn time with respect to the stationary
distribution defined by $\hat{m} := \sum_{\sigma,\sigma'\in E}\mu(\sigma) \int_0^\infty q(\sigma,\sigma';t)$
is finite. Then it was shown in \cite{Girardin:03} that the relative entropy rate of
the semi-Markov process $Z_t$ with model parameter vector $\theta$ w.r.t. the
semi-Markov process $\tilde{Z}_t$ with parameter vector $\theta+\epsilon$ is
given by {\small
\begin{equation}
\begin{aligned}
&\RELENTR{{\PATHS}}{{\PATHSAPP}} = \\
&\frac{1}{\hat{m}} \int_0^\infty \sum_{\sigma,\sigma'\in E} \mu^{\theta}(\sigma) q^{\theta}(\sigma,\sigma';s)
\log\frac{q^\theta(\sigma,\sigma';s)}{q^{\theta+\epsilon}(\sigma,\sigma';s)} ds \, ,
\end{aligned}
\end{equation}
}while the Fisher information matrix is similarly defined as {\small
\begin{equation}
\begin{aligned}
\FISHERR{{\PATHS}} &:= \frac{1}{\hat{m}} \int_0^\infty \sum_{\sigma,\sigma'\in E} \mu^\theta(\sigma) q^\theta(\sigma,\sigma';s) \\
&\times \nabla_\theta \log q^\theta(\sigma,\sigma';s) \nabla_\theta \log q^\theta(\sigma,\sigma';s)^T\,ds \, .
\end{aligned}
\end{equation}
}

\section{Numerical Examples}
\label{num:sec}
We demonstrate the wide applicability of the proposed methods by studying the parameter
sensitivity analysis of three models  with very different features and range of applicability. 
Namely, we discuss the  Schl\"ogl model, reversible and irreversible Langevin processes
and the spatially extended ZGB model. Each of these models reveals different aspects of
the proposed method. However, we will first need to discuss  the necessary statistical
estimators  for the Relative Entropy Rate and the Fisher Information Matrix.

\subsection{Statistical Estimators for RER and FIM}
\label{num:approx}
The Relative Entropy Rate \VIZ{RER:MC}, \VIZ{RER:MP} as well as the Fisher Information Matrix \VIZ{FIM:MC},
\VIZ{FIM:MP}
are observables of the
stochastic process and can be estimated as ergodic averages. Thus, both observables are computationally tractable since
they also depend only on the local transition quantities. We discuss each case separately next.

\medskip
\noindent
{\em Discrete-time Markov Chains\,}:  A statistical  estimator for Markov Chains is  directly obtained from 
\VIZ{RER:MC}.  For instance, in the continuous state
space case, the $n$-sample numerical RER is given by {\small
\begin{equation}
\bar{\mathcal H}_1^{(n)} =  \frac{1}{n} \sum_{i=0}^{n-1}
\int_E p^\theta(\sigma_i,\sigma') \log \frac{p^\theta(\sigma_i,\sigma')}{p^{\theta+\epsilon}(\sigma_i,\sigma')} d\sigma'
\label{RER:num:approx:MC1}
\end{equation}}
while the $n$-sample statistical  estimator for FIM is  {\small
\begin{equation}
\bar{\bf F}_1^{(n)} = \frac{1}{n} \sum_{i=0}^{n-1} \int_E p^\theta(\sigma_i,\sigma')
\nabla_\theta \log p^\theta(\sigma_i,\sigma') \nabla_\theta \log p^\theta(\sigma_i,\sigma')^T d\sigma' ,
\label{FIM:num:approx:MC1}
\end{equation}}
where $\{\sigma_i\}_{i=0}^n$ is a realization of the Markov chain with parameter vector $\theta$ at steady (stationary) state.
Thus the  RER for various different perturbation directions (i.e., different $\epsilon$'s)
is  computed from a single run since only the unperturbed process is needed to be simulated.
However, the integrals in \VIZ{RER:num:approx:MC1} and \VIZ{FIM:num:approx:MC1} are rarely
explicitly computable thus a second statistical  estimator for both RER and FIM is obtained  from
the Radon-Nikodym derivative \VIZ{RNderiv:chain} in the path space. It is given by {\small
\begin{equation}
\bar{\mathcal H}_2^{(n)} = \frac{1}{n} \sum_{i=0}^{n-1}
\log \frac{p^\theta(\sigma_i,\sigma_{i+1})}{p^{\theta+\epsilon}(\sigma_i,\sigma_{i+1})}
\label{RER:num:approx:MC2}
\end{equation}}
 while the second estimator  for FIM is {\small
\begin{equation}
\bar{\bf F}_2^{(n)} = \frac{1}{n} \sum_{i=0}^{n-1}
\nabla_\theta \log p^\theta(\sigma_i,\sigma_{i+1}) \nabla_\theta \log p^\theta(\sigma_i,\sigma_{i+1})^T \ .
\label{FIM:num:approx:MC2}
\end{equation}}
Even though, the second  approach is tractable for any transition probability
function,  it suffers from larger variance (see also Fig.~\ref{RER:inTime:fig}), since 
 the summation over all the possible states in \VIZ{RER:num:approx:MC1}
results in estimators with less variance compared to the variance of estimator
\VIZ{RER:num:approx:MC2}. Hence, the first numerical estimator is preferred
whenever applicable (for instance, when the state space is finite and relatively small).
Finally, the estimators  are valid also for time inhomogeneous Markov chain
where $p^\theta(\sigma_i,\sigma_{i+1})$ is replaced by $p^\theta(\sigma_i,\sigma_{i+1};i)$.

\medskip
\noindent
{\em Continuous-time  Markov Chains\,}: The estimators for CTMC are constructed
along the same lines. Indeed, the first estimator for RER is based on \VIZ{RER:MP}
and it is given by {\small
\begin{equation}
\begin{aligned}
&\bar{\mathcal H}_1^{(n)} =  \frac{1}{T} \sum_{i=0}^{n-1} \Delta\tau_i \Big[ \sum_{\sigma'\in E} c^\theta(\sigma_i,\sigma')\\
&\times \log \frac{c^\theta(\sigma_i,\sigma')}{c^{\theta+\epsilon}(\sigma_i,\sigma')}
- \big(\lambda^\theta(\sigma_i) - \lambda^{\theta+\epsilon}(\sigma_i)\big) \Big]
\label{RER:num:approx:MP}
\end{aligned}
\end{equation}}
where $\Delta\tau_i$ is an exponential random variable with parameter $\lambda(\sigma_i)$
while $T=\sum_i \Delta\tau_i$ is the total simulation time. The sequence $\{\sigma_i\}_{i=0}^n$
is the embedded Markov chain with transition probabilities $p^\theta(\sigma_i,\sigma') =
\frac{c^\theta(\sigma_i,\sigma')}{\lambda(\sigma_i)}$ at step $i$. Notice that the weight
$\Delta\tau_i$ at each step which is the waiting time at state $\sigma_i$ is necessary for the
correct estimation of the observable \cite{Gillespie:76}. Similarly, the  estimator for the
FIM is {\small
\begin{equation}
\bar{\bf F}_1^{(n)} = \frac{1}{T} \sum_{i=0}^{n-1} \Delta\tau_i \sum_{\sigma'\in E} 
c^\theta(\sigma_i,\sigma') \nabla_\theta \log c^\theta(\sigma_i,\sigma') \nabla_\theta \log c^\theta(\sigma_i,\sigma')^T \ .
\label{FIM:num:approx:MP}
\end{equation}}
Notice that the computation of the local transition rates $c^\theta(\sigma_i,\sigma')$ for
all $\sigma'\in E$ is needed for the simulation of the jump Markov process when Monte Carlo
methods such as stochastic simulation algorithm (SSA) \cite{Gillespie:76} is utilized.
Thus, the computation of the perturbed transition rates is the only additional computational
cost of this numerical approximation. On the other hand, the second numerical estimator
for RER is based on the Girsanov representation of the Radon-Nikodym derivative
(i.e., \VIZ{RN:MP}) and it is given by {\small
\begin{equation}
\bar{\mathcal H}_2^{(n)} = \frac{1}{n} \sum_{i=0}^{n-1}
 \log \frac{c^\theta(\sigma_i,\sigma_{i+1})}{c^{\theta+\epsilon}(\sigma_i,\sigma_{i+1})}
- \frac{1}{T} \sum_{i=0}^{n-1} \Delta\tau_i \big(\lambda^\theta(\sigma_i) - \lambda^{\theta+\epsilon}(\sigma_i)\big)
\label{RER:num:approx:MP2}
\end{equation}}
Similarly we can construct an FIM estimator.
Notice that the term  in \VIZ{RER:num:approx:MP2} involving logarithms should not be weighted since the
counting measure is approximated with this estimator. 
Unfortunately, the estimator \VIZ{RER:num:approx:MP2} has the same computational cost as
\VIZ{RER:num:approx:MP} due to the need for the computation of the total rate which
is the sum of the local transition rates. Furthermore, in terms of variance, the latter
estimator has worse performance due to the discarded sum over the states $\sigma'$. 

Finally, we complete this section with a proposition that states that all the proposed
estimators are unbiased. 
\begin{proposition}
Under the assumptions of Proposition \ref{propos:MC} for Markov chains or of
Proposition \ref{propos:MP} for jump Markov processes, the numerical estimators
\VIZ{RER:num:approx:MC1}--\VIZ{RER:num:approx:MP2} are unbiased.
\end{proposition}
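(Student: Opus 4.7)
The plan is to establish unbiasedness by direct computation for each estimator, invoking stationarity of the underlying process together with linearity and Fubini-type interchanges. I would organize the proof into the discrete-time case (which is essentially algebraic) and the continuous-time case (which requires care with the random time horizon and the counting measure), handling the two FIM estimators as immediate variants since they share the structure of the RER estimators with $\log(p^\theta/p^{\theta+\epsilon})$ replaced by $\nabla_\theta\log p^\theta\,\nabla_\theta\log p^\theta{}^T$ (and analogously for the rates).

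For the discrete-time estimators I would assume $\sigma_0\sim\mu^\theta$ so that the chain is in its stationary regime and $\sigma_i\sim\mu^\theta$ for every $i$. By linearity of expectation,
\begin{equation*}
\mathbb{E}\bigl[\bar{\mathcal H}_1^{(n)}\bigr]=\mathbb{E}_{\mu^\theta}\!\left[\int_E p^\theta(\sigma,\sigma')\log\frac{p^\theta(\sigma,\sigma')}{p^{\theta+\epsilon}(\sigma,\sigma')}d\sigma'\right],
\end{equation*}
which matches \VIZ{RER:MC}. For $\bar{\mathcal H}_2^{(n)}$ the crucial observation is that at stationarity the joint law of $(\sigma_i,\sigma_{i+1})$ is exactly $\mu^\theta(\sigma)p^\theta(\sigma,\sigma')$, so conditioning on $\sigma_i$ and integrating out $\sigma_{i+1}$ again reproduces \VIZ{RER:MC}. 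The same two computations, with $\log(p^\theta/p^{\theta+\epsilon})$ replaced by $\nabla_\theta\log p^\theta\,\nabla_\theta\log p^\theta{}^T$, yield unbiasedness of $\bar{\mathbf F}_1^{(n)}$ and $\bar{\mathbf F}_2^{(n)}$ against \VIZ{FIM:MC}.

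For the continuous-time estimators I would work with a fixed time horizon $T$ rather than a fixed jump count $n$, to avoid the awkward random denominator that appears when $n$ is fixed. With this interpretation the weighted sum in $\bar{\mathcal H}_1^{(n)}$ equals exactly the time integral $\tfrac{1}{T}\int_0^T \phi(\sigma_s)\,ds$, where $\phi$ is the bracketed observable in \VIZ{RER:num:approx:MP}. Stationarity of $\{\sigma_t\}_{t\ge 0}$ and Fubini give
\begin{equation*}
\mathbb{E}\!\left[\frac{1}{T}\int_0^T\!\phi(\sigma_s)\,ds\right]=\mathbb{E}_{\mu^\theta}[\phi],
\end{equation*}
which is precisely \VIZ{RER:MP}. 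The same argument applies verbatim to $\bar{\mathbf F}_1^{(n)}$ and recovers \VIZ{FIM:MP}. For the Girsanov-based estimator $\bar{\mathcal H}_2^{(n)}$, I would split it into its two pieces: the compensator piece is the same time integral as above, while the jump piece $\sum_i \log(c^\theta(\sigma_i,\sigma_{i+1})/c^{\theta+\epsilon}(\sigma_i,\sigma_{i+1}))$ is handled by the martingale identity already used in the proof of Proposition~\ref{propos:MP}, namely $\mathbb{E}[\int_0^T\psi(\sigma_{s-},\sigma_s)dN_s]=\mathbb{E}[\int_0^T\sum_{\sigma'}c^\theta(\sigma_s,\sigma')\psi(\sigma_s,\sigma')ds]$ with $\psi=\log(c^\theta/c^{\theta+\epsilon})$, followed by stationarity.

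The main obstacle I expect is precisely this continuous-time normalization point: when $n$ is fixed, $T=\sum_{i=0}^{n-1}\Delta\tau_i$ is random, and $\mathbb{E}[1/T\cdot S]$ is not $\mathbb{E}[S]/\mathbb{E}[T]$ in general, so one cannot naively distribute expectations through $1/T$. The clean resolution is the reinterpretation described above (fix $T$, let $N(T)$ be random), under which each estimator becomes an exact functional of the continuous-time trajectory whose expectation is computed by stationarity and the compensator identity. All remaining steps are algebraic bookkeeping.
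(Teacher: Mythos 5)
Your proof is correct, and for the discrete-time estimators it coincides with the paper's own argument: the paper proves unbiasedness only for $\bar{\mathcal H}_2^{(n)}$ in \VIZ{RER:num:approx:MC2}, by exactly your computation (marginalize the path integral down to the joint law $\mu^\theta(\sigma_i)p^\theta(\sigma_i,\sigma_{i+1})$ of consecutive states and sum), and dismisses the remaining estimators as ``similar.'' Where you genuinely add something is the continuous-time case. The paper never confronts the fact that in \VIZ{RER:num:approx:MP} and \VIZ{RER:num:approx:MP2} the normalization $T=\sum_i\Delta\tau_i$ is random when the jump count $n$ is fixed, so that the expectation of the ratio is not the ratio of expectations and the estimator as literally written is only asymptotically unbiased; your reinterpretation with a fixed horizon $T$ and random $N(T)$, under which the weighted sum becomes the exact time integral $\frac{1}{T}\int_0^T\phi(\sigma_s)\,ds$ and stationarity plus the compensator identity $\mathbb{E}[\int_0^T\psi\,dN_s]=\mathbb{E}[\int_0^T\lambda^\theta(\sigma_{s-})\,\mathbb E[\psi\,|\,\sigma_{s-}]\,ds]$ give exact unbiasedness, is the clean way to make the proposition literally true and is more careful than the source. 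One small point to tidy up: with $T$ fixed, $\sum_{i=0}^{N(T)-1}\Delta\tau_i\,\phi(\sigma_i)$ differs from $\int_0^T\phi(\sigma_s)\,ds$ by the truncated final sojourn $(T-T_{N(T)})\phi(\sigma_{N(T)})$, so you should either include that boundary term in the estimator or note that it is $O(1)$ against the $O(T)$ integral; this does not affect the conclusion but should be stated.
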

\begin{proof}
The proofs for each estimator are similar and they are more or less hidden
in the proofs of Propositions \ref{propos:MC} and \ref{propos:MP}. Nevertheless,
we provide the proof for the estimator \VIZ{RER:num:approx:MC2} for the sake
of completeness. We have that
\begin{equation*}
\begin{aligned}
&\mathbb E_Q \big[\bar{\mathcal H}_2^{(n)} \big] = \int\cdot\cdot\cdot \int \frac{1}{n} \sum_{i=0}^{n-1}
\log \frac{p^\theta(\sigma_i,\sigma_{i+1})}{p^{\theta+\epsilon}(\sigma_i,\sigma_{i+1})} \\
&\times \mu^\theta(\sigma_0)
p^\theta(\sigma_0,\sigma_1) \cdot\cdot\cdot p^\theta(\sigma_{n-1},\sigma_n) d\sigma_0\cdot\cdot\cdot d\sigma_n \\
&= \frac{1}{n} \sum_{i=0}^{n-1} \int\int \log \frac{p^\theta(\sigma_i,\sigma_{i+1})}{p^{\theta+\epsilon}(\sigma_i,\sigma_{i+1})}
\mu^\theta(\sigma_i)p^\theta(\sigma_i,\sigma_{i+1}) d\sigma_i d\sigma_{i+1} \\
&= \RELENTR{Q^{\theta}}{Q^{\theta+\epsilon}}
\end{aligned}
\end{equation*}
which completes the proof.
\end{proof}

\subsection{Schl\"ogl Model}
\label{num:sec:Schlogl}
The Schl\"ogl model describes a well-mixed chemical reaction network between three
species $A,\ B,\ X$ \cite{Schlogl:72, Vellela:09}. The concentrations $A,\ B$ are kept
constant while the reaction
rates $k_1,...,k_4$ are the parameters of the model. Table~\ref{Schlogl:rates:table}
provides the propensity functions (rates) for these reactions where $\Omega$ is the
volume of the system Note that $\Omega$ serves as a normalization for the reaction
rates making them of the same order.  Thus, there is no need to resort in logarithmic
sensitivity analysis even though this is possible (see Appendix~\ref{log:sens:analysis:app}). 

\begin{table}[!htb]
\begin{center}
\caption{The rate of the $k$th event when the number of $X$ molecules is $x$
is denoted by $c_k(x)$.  $\Omega$ is the volume of the system.}
\begin{tabular}{|c|c|c|} \hline
Event & Reaction & Rate \\ \hline \hline
1 & $A+2X \rightarrow 3X$ & $c_1(x) = k_1A x(x-1)/(2\Omega)$ \\ \hline
2 & $3X \rightarrow A+2X$ & $c_2(x) = k_2 x(x-1)(x-2) / (6\Omega^2)$ \\ \hline
3 & $B \rightarrow X$ & $c_3(x) = k_3B\Omega$ \\ \hline
4 & $X \rightarrow B$ & $c_4(x) = k_4 x$ \\ \hline
\end{tabular}
\label{Schlogl:rates:table}
\end{center}
\end{table}

The stochastic process describing the number of $X$ molecules of the Schl\"ogl model
is a CTMC with rates provided in Table~\ref{Schlogl:rates:table}.
Since the Schl\"ogl model is a birth/death process, the exact stationary distribution $\mu(x)$,
can be iteratively computed from the reaction rates utilizing the detailed balance condition
\cite{Gardiner:85}. It states that
\begin{equation}
\mu(x) c(x,x+1) = \mu(x+1) c(x+1,x)
\end{equation}
where $c(x,x+1) = c_1(x)+c_3(x)$ is the birth rate at state $x$ while $c(x,x-1)
= c_2(x)+c_4(x)$ is the death rate of the same state. Having the exact stationary
distribution a simple benchmark for the sensitivity of the system is provided.
Furthermore for the parameter values in
Table~\ref{Schlogl:param:val}, the stationary distribution of the Schl\"ogl model
possesses two most probable constant steady states (see also Fig.~\ref{perturbed:inv:meas:fig},
solid lines). Thus, the stochastic process is  non-Gaussian  and Gaussian
approximations \cite{Komorowski:11} are invalid, at least at long times where
transitions between the most likely states take place, see (see Figs.~\ref{RER:inTime:fig}
and \ref{perturbed:inv:meas:fig}). Capturing these transitions is a crucial element for
the correct calculation of stationary dynamics and the efficient sampling of the
stationary distribution. Notice also that there
are studies on sensitivity analysis \cite{Gunawan:05, Degasperi:08} where
the Schl\"ogl model with volume $\Omega=100$ has been used for benchmarking,
however, for this value of $\Omega$  the most likely states in Fig.~\ref{perturbed:inv:meas:fig}
are steep and the simulation algorithm is trapped, depending
on the initial data, into the one of the two corresponding wells. Thus, for deep
wells it takes an  exponentially long time to make a transition from one to the
other well, consequently, the sensitivity analysis is biased and depends on the
initial value of the process. In fact, in the case of deep wells the Gaussian
approximation is correct and the FIM analysis \cite{Komorowski:11} applies
as long as the process remains trapped. In a intuitive sense, the volume $\Omega$
can be thought as the inverse temperature of the system making the stationary
distribution more or less steep \cite{Hanggi:84}.

\begin{table}[!htb]
\centering
\caption{Parameter values for the Schl\"ogl model.}
\begin{tabular}{|c||c|c|c|c|c|} \hline
Parameters & $\Omega$ & $k_1A$ & $k_2$ & $k_3B$ & $k_4$ \\ \hline
Values & 15 & 3 & 1 & 2 & 3.5 \\ \hline
\end{tabular}
\label{Schlogl:param:val}
\end{table}

\begin{figure}[!htb]
\begin{center}
\includegraphics[width=0.5\textwidth]{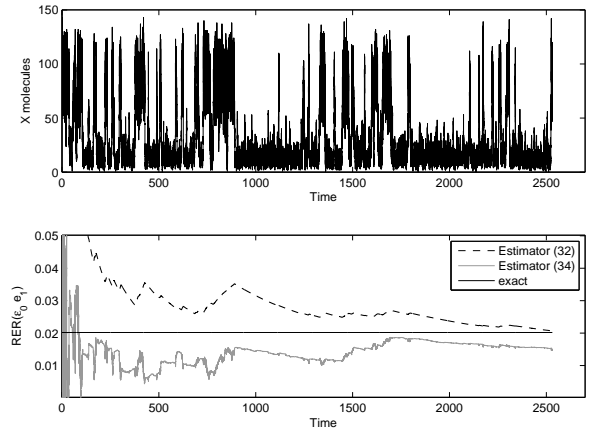}
\caption{Upper plot: The number of $X$ molecules as a function of time.
The stochastic process sequentially visits the two most probable states
defined as the maxima of the PDF. Lower
panel: RER as a function of time when $k_1A$ is perturbed by 0.05
computed using \VIZ{RER:num:approx:MP} (dashed line) and using
\VIZ{RER:num:approx:MP2} (grey line). In both cases, the accuracy of
the numerical estimators increase as the number of samples increases.}
\label{RER:inTime:fig}
\end{center}
\end{figure}

Let denote $\theta=[k_1A,k_2,k_3B,k_4]^T$, then the numerical estimator for
RER as well as for FIM for the Schl\"ogl model is given by \VIZ{RER:num:approx:MP}
and \VIZ{FIM:num:approx:MP}, respectively.
The upper panel of Fig.~\ref{RER:inTime:fig} shows the number of $X$
molecules in the course of time. The number of jumps of this process are
$10^6$ while the initial value $X_0=100$ is slightly above the minimum
of the second well. The lower panel of Fig.~\ref{RER:inTime:fig}
shows the numerical RER (dashed line) as a function of time when only $k_1A$ is
perturbed by $0.05$ (i.e., perturbation is $\epsilon=0.05 e_1$) as well as the exact
RER computed from (\ref{RER:MP}). For comparison purposes, we also plot the
RER estimator \VIZ{RER:num:approx:MP2}. Obviously, as simulation time is increased
both numerical RER estimators converge to the exact value even though the
estimator \VIZ{RER:num:approx:MP2} needs more samples to converge (i.e.,
its variance is larger). Notice that enough transitions
between the two steady states are necessary in order to obtain robust results.
Fig.~\ref{RER:directions:fig} depicts the exact RER (circles), the
numerically-computed RER (stars) as well the FIM-based RER (squares).
The directions $\pm\epsilon_0 e_k, \ k=1,...,4$ where $\epsilon_0$ is set
to $0.05$ while $e_k$ are the typical orthonormal unit vectors are
considered. These directions correspond to the perturbation of just one of the
model's parameters. The number of jumps of this simulation is $5\cdot10^6$
while the initial value is again $X_0=100$. The numerically-computed RERs
have similar values with the exact ones as Fig.~\ref{RER:directions:fig}
demonstrates. The computed RERs imply that the most sensitive parameter
is $k_2$ (corresponds to $\pm e_2$) while the least sensitive parameter
is $k_3B$ (corresponds to $\pm e_3$). Another important feature of the
proposed sensitivity method is that the
RERs for all the different parameter perturbations are computed from a single
simulation run of the unperturbed process. Thus, for each direction, the only additional
computational cost is the calculation of the perturbed rates of the process.
Notice also that RER gives different values between a direction and its opposite
resulting in assigning different sensitivities while FIM-based RER cannot distinguish
between the two opposite directions since it is a second-order (quadratic)
approximation.

\begin{figure}[!htb]
\begin{center}
\includegraphics[width=0.5\textwidth]{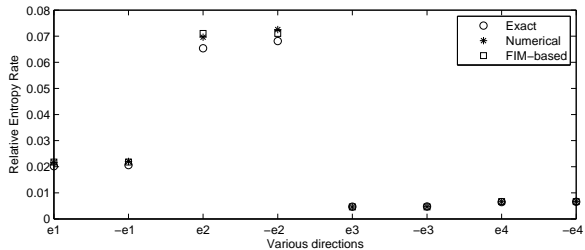}
\caption{Exact (circles), numerical (stars) and FIM-based (squares) RER for various
directions. $k_2$ is the most sensitive parameter followed by $k_1A$ while
the least sensitive parameters are $k_4$ and $k_3B$.}
\label{RER:directions:fig}
\end{center}
\end{figure}

We further validate the inference capabilities of RER by illustrating the actual stationary distribution
 of the perturbed processes. It is expected that the most/least sensitive
parameters of the path distribution should be strongly related with the most/least
sensitive parameters of the stationary distribution. Indeed, the upper panel of
Fig.~\ref{perturbed:inv:meas:fig} presents the stationary distributions of the
unperturbed process (solid line) as well the perturbed stationary distribution of the
most (dashed line) and least (dotted line) sensitive parameters. The perturbation
of the most sensitive parameter results in the largest change of the 
stationary distribution while the smallest change is observed when the least sensitive parameter
is perturbed. Moreover, FIM can be used for the computation not only of the most
sensitive parameter but also for the computation of the most sensitive direction
in general. Indeed, the most sensitive direction can be found by performing
eigenvalue analysis to the FIM. The eigenvector with the highest eigenvalue
defines the most sensitive direction. In our setup, the most sensitive direction
is $\epsilon_{\max} = [0, 0.978, 0, 0.207]$. The prominent parameter of the
most sensitive direction is $k_2$ which is not a surprise since, from
Fig.~\ref{RER:directions:fig}, $k_2$ is the most sensitive parameter. The
lower panel of Fig.~\ref{perturbed:inv:meas:fig} depicts the 
stationary distribution of the most sensitive parameter (i.e., $k_2$ or $-\epsilon_0 e_2$)
(dashed line) and the most sensitive direction (i.e., $\epsilon_0\epsilon_{\max}$)
(dotted line). It is evident that the stationary distribution of the most sensitive
direction is further away from the unperturbed stationary distribution compared
to the stationary distribution of the most sensitive parameter.

\begin{figure}[!htb]
\begin{center}
\includegraphics[width=0.5\textwidth]{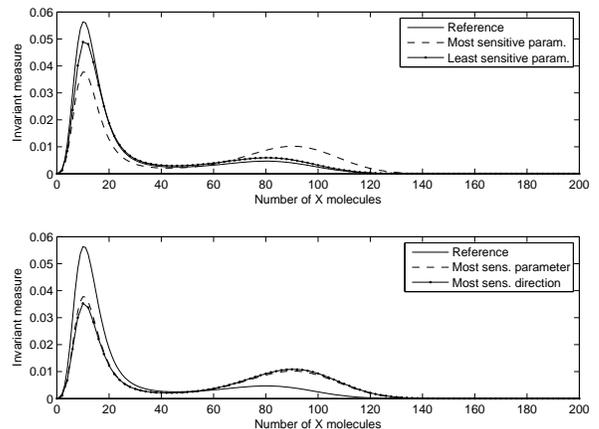}
\caption{Upper plot: The stationary distributions for the unperturbed process (solid line), the most sensitive parameter $k_2$
(dashed line) and the least sensitive paramter $k_3B$ (dotted line).
Lower plot: The stationary distributions for the unperturbed process (solid line), the most sensitive parameter $k_2$
(dashed line) and the most sensitive direction $\epsilon_{\max}$ (dotted line).}
\label{perturbed:inv:meas:fig}
\end{center}
\end{figure}

\subsection{Reversible and non-reversible Langevin Processes}
\label{num:sec:Langevin}
The second example we consider  is a particle model with interactions which have been applied
and studied primarily in molecular dynamics \cite{Rapaport:95, Schlick:02, Frenkel:02,
Lelievre:10} but also in biology (for instance, in swarming \cite{Ebeling:08}),
etc. In molecular dynamics, the Langevin dynamics is typically a
Hamiltonian system coupled with a thermostat (i.e., noise).
A Langevin process is defined by the SDE system {\small
\begin{equation}
\begin{aligned}
&dq_t = \frac{1}{m}p_t dt \\
&dp_t = - {\bf F}(q_t)dt - \frac{\gamma}{m} p_t dt + \sigma dB_t \\
\label{Langevin:sde:cont}
\end{aligned}
\end{equation}}
where $q_t\in\mathbb R^{dN}$ is the position vector of the $N$ particles
in $d$ dimensions,
$p_t\in\mathbb R^{dN}$ is the momentum vector of the particles, $m$ is
the mass of the particles, $\bf F$ is a driving force, $\gamma$ is the friction factor,
$\sigma$ is the diffusion factor and $B_t$ is a $dN$-dimensional Brownian
motion. The first equation which describes the evolution of the position
of the particles is deterministic thus the overall SDE system is degenerate.
In the zero-mass limit or the infinite-friction limit, Langevin process is simplified
to overdamped Langevin process which is non-degenerate, however, several
studies advocate the use of Langevin dynamics directly \cite{Scemama:06, Cances:07}.
The proposed sensitivity analysis approach is widely applicable to
SDE systems once the  assumption on ergodicity is  satisfied.


The vector field  ${\bf F}(\cdot)$ denotes the force exerted on the system and here we assume
it consists of two terms:  a gradient (potential) component as in typical Langevin systems, as well
as an additional non-gradient term, where the latter is assumed to be   divergence-free:
\begin{equation}\label{Force}
{\bf F}(q) = \nabla_q V(q) + \alpha G(q)\, ,
\end{equation}
and  $\nabla_q \cdot G=0$.
Here we consider particular examples to illustrate the applicability of the proposed sensitivity analysis methods.
The gradient term in \VIZ{Force}  models particle interactions
given by
\begin{equation}
V(q) = \sum_{i,j<i} V_M (|q_i-q_j|)
\end{equation}
where $V_M(r)$ is the three-parameter Morse potential $V_M(r) = D_e (1 - e^{-a(r-r_e)})^2$.
The Morse potential includes  a combination of  short-range  repulsive and long-range attractive
interactions and has been extensively used in molecular simulations \cite{Kaplan:03}. The
divergence-free component is taken to be a simple antisymmetric force given by
\begin{equation}
G_i(q) = q_{i+1} - q_{i-1}\ ,\ \ \ i=1,...,N \\
\end{equation}
where $q_0=q_N$ and $q_{N+1}=q_1$.

We now return to \VIZ{Force} and discuss the implications of its structure. When $\alpha=0$, the Langevin
process is reversible meaning that the condition of detailed balance is satisfied with respect to a known
Gibbs stationary probability  distribution \cite{Lelievre:10}. However,  knowing  the stationary distribution
explicitly is insufficient to  carry out sensitivity analysis on the stationary dynamics which typically 
may include dynamic transitions between metastable states, as in the Schl\"ogl Model discussed earlier.
Furthermore,  when $\alpha\neq0$, detailed balance does not hold true in general and the stationary probability
distribution of the corresponding Langevin process is not known since the system is non-reversible
\cite{Lebowitz:99, Maes:00}. Examples of  forces  such as \VIZ{Force} that include non-gradient terms
and yield  non-reversible Langevin equations, arise  typically in driven systems, for instance in Brownian
particle suspensions where particles interact with a fluid flow \cite{Alber:11}. For non-reversible
systems  no efficient method for sensitivity analysis has been reported in the literature,  at least for
the cases dealt here, namely (a) long-time, stationary dynamics  (also referred to as non-equilibrium
steady states (NESS) \cite{Lebowitz:99, Maes:00}), as well as, (b)  the unknown stationary probability.
Our proposed path-space RER sensitivity methods can address these challenges and is straightforwardly
applicable to both reversible and non-reversible Langevin equations as we show next.

First, an explicit EM--Verlet (symplectic)--implicit EM scheme is applied for
the discretization of (\ref{Langevin:sde:cont}). It is written as {\small
\begin{equation}
\begin{aligned}
p_{i+\frac{1}{2}} &= p_i -{\bf F}(q_i)\frac{\Delta t}{2} - \frac{\gamma}{m}p_i \frac{\Delta t}{2} + \sigma \Delta W_i \\
q_{i+1} &= q_i + m^{-1}p_{i+\frac{1}{2}} \Delta t \\
p_{i+1} &= p_{i+\frac{1}{2}} -{\bf F}(q_{i+1})\frac{\Delta t}{2} - \frac{\gamma}{m}p_{i+1} \frac{\Delta t}{2} + \sigma \Delta W_{i+\frac{1}{2}} \\
\label{Langevin:BBK:integrator}
\end{aligned}
\end{equation}}
with $\Delta W_i, \Delta W_{i+\frac{1}{2}} \sim {\it N}(0, \frac{\Delta t}{2} I_{dN})$ where
${\it N}$ is the multivariate normal distribution. This
numerical scheme also known as BBK integrator \cite{Brunger:84, Lelievre:10} utilizes
a Strang splitting. Thus, the discretized Langevin process is a Markov chain with continuous
state space. Notice that the numerical scheme is non-degenerate, thus, the transition
probability from state $(q,p)$ to state $(p',q')$ is given by
\begin{equation}
P(q,p,q',p') = P(q'|q,p) P(p'|q',q,p)
\end{equation}
where
\begin{equation}
P(q'|q,p) = \frac{1}{Z_0} e^{-\frac{m^2}{\sigma^2\Delta t^3}
\left|q'-q + (p-{\bf F}(q)\frac{\Delta t}{2m} + p\frac{\Delta t \gamma}{2m})\Delta t\right|^2}
\end{equation}
and
\begin{equation}
P(p'|q',q,p) = \frac{1}{Z_1} e^{-\frac{1}{\sigma^2\Delta t}
\left|(1+\frac{\gamma\Delta t}{2m})p'-(\frac{m}{\Delta t}(q'-q) - \frac{\Delta t}{2}{\bf F}(q')) \right|^2}
\end{equation}
where $Z_i,\ i=0,1$ are the respective normalization constants. Let now define the
parameter vector $\theta = [D_e, a, r_e]$. Then, the discretized Langevin model
(\ref{Langevin:BBK:integrator}) is a discrete-time Markov process with
$\mathbb R^{2dN}$ being the state space. The statistical  estimators for RER
as well as for FIM are given by \VIZ{RER:num:approx:MC2} and
\VIZ{FIM:num:approx:MC2}, respectively. Notice that the estimators with larger
variance were chosen because the integration of the transition probability
density function w.r.t. the positions is not a trivial problem, if not intractable
in high dimensions.

\begin{table}[!htb]
\centering
\caption{Parameter values for the discretized Langevin system.}
\begin{tabular}{|c||c|c|c|c|c|c|c|c|c|} \hline
Parameters & $N$ & $D_e$ & $a$ & $r_e$ & $m$ & $\gamma$ & $\sigma$ & $\Delta t$ \\ \hline
Values & 3 & 0.3 & 0.3 & 1 & 1 & 1 & 0.1 & 0.01 \\ \hline
\end{tabular}
\label{Langevin:param:val}
\end{table}

The upper panel of Fig.~\ref{Langevin:RER} depicts the numerical RER
as a function of simulation time for the parameter values given in Table~\ref{Langevin:param:val}.
The reversible case is considered while the sensitivity of the parameters is obtained
from the directions defined by the orthonormal unit vectors multiplied with $\epsilon_0=0.05$.
Since the initial positions and momenta where randomly chosen from a uniform
distribution an initial out-of-equilibrium time regime can be seen in the Figure
(up to time $t_0=100$). Moreover, the variance of RER
as an observable is rather large which can be explained by the small number
of particles. Systems with more particles are expected to converge faster due to
averaging effects. The lower panel of Fig.~\ref{Langevin:RER} depicts the
RER at final time $t=10^4$ with an initial equilibration time $t_0=100$ where the
numerical RER is discarded. Evidently, the most sensitive parameter is $a$
followed by $D_e$ while the least sensitive parameter is $r_e$.

\begin{figure}[!htb]
\begin{center}
\includegraphics[width=0.5\textwidth]{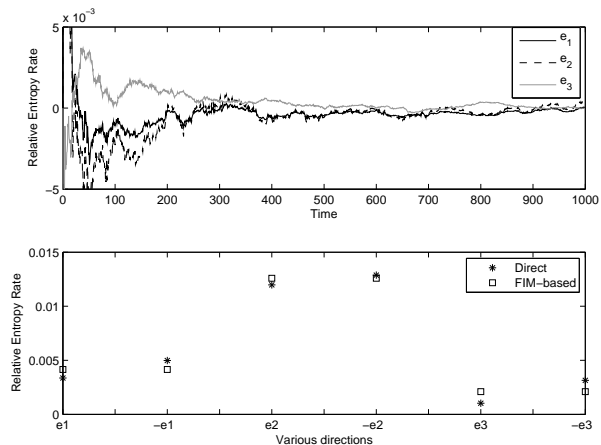}
\caption{Upper plot: Relative entropy rate as a function of time for perturbations of
$D_e$ (solid line), $a$ (dashed line) and of $r_e$ (grey line) at the reversible
regime ($\alpha=0$). The variance of the numerical RER is large, necessitating
more samples for accurate estimation. Lower plot: RER for various directions.
The most sensitive parameter is $a$.}
\label{Langevin:RER}
\end{center}
\end{figure}

Utilizing our methodology the parameter sensitivity of not only the reversible regime
but also of the {\it non-reversible}, $\alpha \ne 0$,  regime can be explored even though the stationary probability is not
known. Fig.~\ref{Langevin:level:sets} shows the level sets of the FIM matrix for
the reversible case (upper plots, $\alpha=0$) and for the irreversible case (lower plots,
$\alpha=0.1$). Figure suggests that the additional irreversible component results in
the fact that some directions became more sensitive and some other directions
became less sensitive. Further validation is obtained from the eigenvalues of the
FIM which are $7.30, 0.592, 0.015$ for the reversible case while the eigenvalues
for the irreversible case are $13.90, 0.302, 0.074$.
Finally, FIM can be very useful in various ways for the quantification of sensitivity
analysis. For instance, the determinant of FIM which in optimal experiment design is called
A-optimality can be used as a measure of parameter identification \cite{Rothenberg:71,
Emery:98, Komorowski:11}. In our particular example, the determinants are $0.065$ and
$0.313$ for the reversible and irreversible cases, respectively. This result asserts
that in the non-reversible case $\alpha \ne 0$ in \VIZ{Force}, the divergence-free
component improves the ability of any estimator of the potential's parameters.

\begin{figure}[!htb]
\begin{center}
\includegraphics[width=0.5\textwidth]{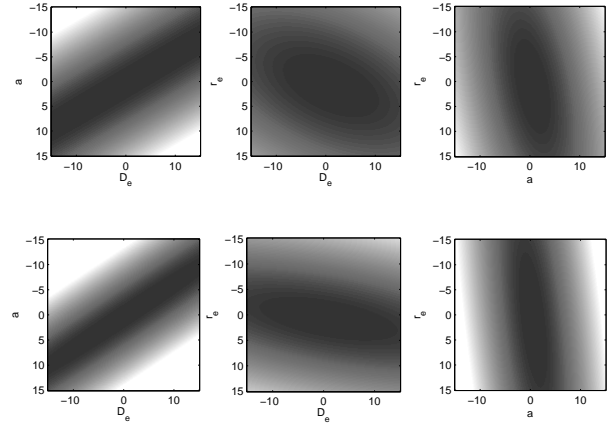}
\caption{Upper plots: Level sets (or neutral spaces) for the reversible case ($\alpha=0$).
Lower plots: Level sets for the irreversible case ($\alpha=0.1$).}
\label{Langevin:level:sets}
\end{center}
\end{figure}

\subsection{Spatially extended Kinetic Monte Carlo models}
\label{num:sec:ZGB}
The applicability of the proposed sensitivity method is further demonstrated in
spatially extended systems which exhibit complex spatio-temporal morphologies
such as islands, spirals, rings, etc. at mesoscale length scales. Among the various
surface mechanisms such as adsorption, desorption, diffusion, etc. we focus
on $CO$ oxidation which is a prototypical example for molecular-level reaction-diffusion
mechanism between adsorbates on a catalytic surface. A simplified $CO$ oxidization model
without diffusion known as the Ziff-Gulari-Barshad (ZGB) model \cite{Ziff:86} is considered.
Despite being an idealized model, the ZGB model incorporates the basic mechanisms
for the dynamics of adsorbate species during $CO$ oxidation on catalytic surfaces,
namely, single site updates (adsorption/desorption) and multisite reactions (two neighboring
sties being involved). Due to the reactions between species, the ZGB model is non-reversible
and its stationary distribution is unknown. Nevertheless, our sensitivity analysis methodology
is capable of quantify the parameter sensitivities utilizing only the rates of the process
which are provided in Table~\ref{ZGB:rates:table}. The spins of the two dimensional
lattice $\Lambda_N$ with $N$ lattice sites take values
$\sigma(j)=0$ denoting a vacant site $j\in\Lambda_N$, $\sigma(j)=-1$ for a $CO$ molecule
at site $j$ and $\sigma(j)=1$ for an $O$ molecule. Depending on the local configuration
of site $j$ as well as of the nearest neighbors, the events with the respective rates provided
in Table~\ref{ZGB:rates:table} are executed.

{\small
\begin{table}
\begin{center}
\caption{The rate of the $k$th event of the $j$th site given that the current configuration
is $\sigma$ is denoted by $c_k(j;\sigma)$ where n.n. stands for nearest neighbors.}
\begin{tabular}{|c|c|c|} \hline
Event & Reaction & Rate \\ \hline \hline
1 & $\emptyset \rightarrow CO$ & $(1-\sigma(j)^2) k_1$ \\ \hline
2 & $\emptyset \rightarrow O_2$ & $(1-\sigma(j)^2) (1-k_1) \frac{\#\text{vacant n.n.}}{\text{total n.n.}}$ \\ \hline
3 & $CO + O \rightarrow CO_2 + \text{des.}$ & $\frac{1}{2}\sigma(j) (1+\sigma(j)) k_2 \frac{\#O\text{ n.n.}}{\text{total n.n.}}$ \\ \hline
4 & $O + CO \rightarrow CO_2 + \text{des.}$ & $\frac{1}{2}\sigma(j) (\sigma(j)-1) k_2 \frac{\#CO\text{ n.n.}}{\text{total n.n.}}$ \\ \hline
\end{tabular}
\label{ZGB:rates:table}
\end{center}
\end{table}}

The ZGB model is a  high-dimensional CTMC
which here is simulated utilizing the stochastic simulation algorithm \cite{Gillespie:76}.
For each step of the simulation, the rates of the process for all sites of the lattice
are needed. Interestingly, in order to perform our sensitivity analysis to the
system parameters, only the rates are incorporated. Indeed, denoting by
$\theta = [k_1,k_2]$ the parameter vector, then the statistical estimators
of RER as well as of FIM for the ZGB model are given by \VIZ{RER:num:approx:MP}
and \VIZ{FIM:num:approx:MP}, respectively. Nevertheless, we explicitly provide
the numerical RER estimator for convenience: {\small
\begin{equation}
\begin{aligned}
\bar{\mathcal H}_1^{(n)} =
\frac{1}{T} \sum_{i=0}^{n-1} &\Delta\tau_i \Big[\sum_{j\in \Lambda_N}\sum_{k=1}^4 c_k^\theta(j;\sigma_i)
\log \frac{c_k^\theta(j;\sigma_i)}{c_k^{\theta+\epsilon}(j;\sigma_i)} \\
&+ \lambda^{\theta+\epsilon}(\sigma_i) - \lambda^\theta(\sigma_i) \Big]
\label{ZGB:RER:num:approx}
\end{aligned}
\end{equation}}
where $c_k^\theta(j;\sigma)$ is the $k$th event of lattice site $j$ when the
lattice configuration is $\sigma$ while $\lambda^\theta(\sigma) =
\sum_{j\in \Lambda_N}\sum_{k=1}^4 c_k^\theta(j;\sigma)$ is the total
rate of the process at state $\sigma$.

The upper panel of Fig.~\ref{ZGB:RER} depicts the RER as a function of simulation time
when $k_1=0.35$ is perturbed by $\epsilon_0=0.02$ (solid line) and when $k_2=0.85$
is perturbed by the same amount. It is evident that after an initial burning time, RER
converges fast to a limit value implying that the variance of RER as an observable
is small. This can be explained by the fact that at each step of the simulation,
(\ref{ZGB:RER:num:approx}) averages the over the entire lattice in order to compute
the instantaneous RER. The lower panel of Fig.~\ref{ZGB:RER} depicts the RER at final
time $t=100$ with an initial equilibration time $t_0=10$ where the instantaneous
RER is discarded. Obviously, the most sensitive parameter is $k_1$ which is related
with the adsorption mechanism while the least sensitive is $k_2$. 
In order to further validate our findings,
we plot the lattice configuration when either $k_1$ or $k_2$ is perturbed by $\epsilon_0=0.02$.
Fig.~\ref{ZGB:config} depicts the configuration of the unperturbed system
as well as the configurations when one of the two model parameters are perturbed.
Evidently, the configuration when the most sensitive parameter (i.e., $k_1$) is
perturbed is less similar to the unperturbed configuration compared to the
configuration when the least sensitive parameter (i.e., $k_2$) is perturbed.

\begin{figure}[!htb]
\begin{center}
\includegraphics[width=0.5\textwidth]{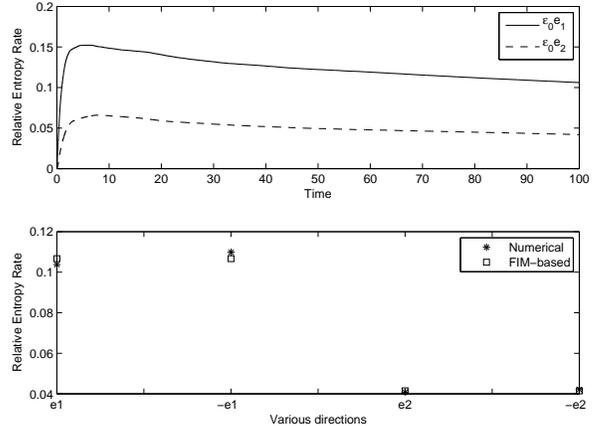}
\caption{Upper plot: Relative entropy rate as a function of time for perturbations of
both $k_1$ (solid line) and of $k_2$ (dashed line). An equilibration time
until the process reach its metastable regime is evident. Lower plot: RER for various
directions. The most sensitive parameter is $k_1$.}
\label{ZGB:RER}
\end{center}
\end{figure}

\begin{figure}[!htb]
\begin{center}
\includegraphics[width=0.5\textwidth]{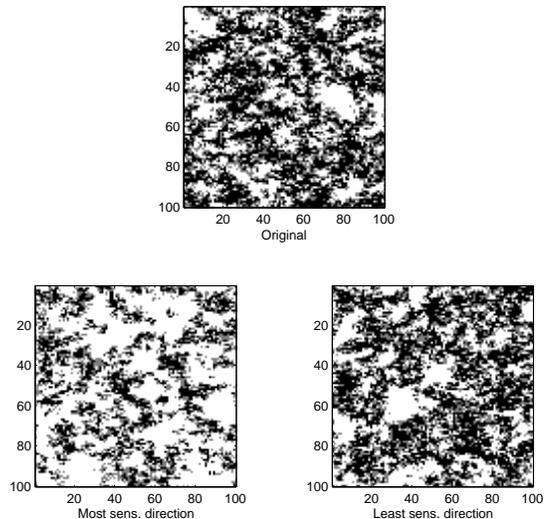}
\caption{Configurations obtained by $\epsilon_0$-perturbations of the most and least
sensitive parameters. The comparison with the reference configuration reveals the
differences between the most and least sensitive perturbation parameters.}
\label{ZGB:config}
\end{center}
\end{figure}

Thus far, we have performed local sensitivity analysis meaning that we were
concentrated around a single point of the parameter space. Even though various
global sensitivity analysis approaches have been derived based on variance \cite{Chan:97,Saltelli:08}
or on mutual information \cite{Ludtke:08}, here, we present a demonstration of global
sensitivity analysis based on a phase diagram of the most and least sensitive
directions. Indeed, any direction can be seen as a vector field and a phase diagram
of a subset of the parameter regime can be visualized. Fig.~\ref{ZGB:RER:phase:diagram}
depicts the most (solid) and least (dashed) sensitive directions which correspond
to the stronger and weaker eigenvalues of the FIM, respectively. Notice that the most/least
sensitive directions are parallel to the axes which asserts that the FIM is diagonal.
This can be explained by the fact that the parameters of the model $k_1$ and
$k_2$ affect different rates in a decoupled fashion (check Table~\ref{ZGB:rates:table}).

Finally, we note  that even though we have considered a spatial KMC model with
few parameters to assess their sensitivity, our emphasis is  primarily on (a) the high
dimensionality of the process, and (b) the non-reversibility of the process without prior
knowledge of  the  stationary probability distribution. For such complex systems
there appears to be no previous systematic work  in the literature on sensitivity
analysis. 

\begin{figure}[!htb]
\begin{center}
\includegraphics[width=0.45\textwidth]{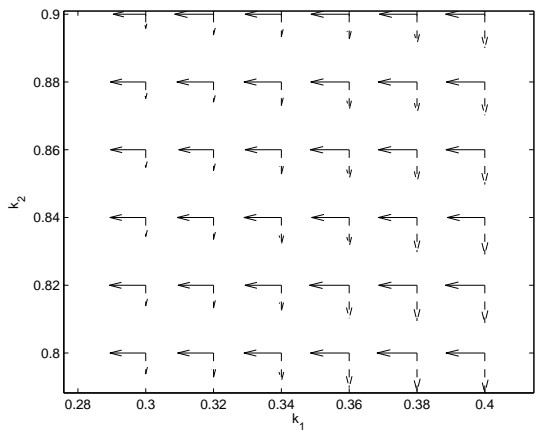}
\caption{Vector field with the most (solid arrows) and least (dashed arrows) sensitive
directions computed from eigenvalue analysis of FIM. The length of the arrows is
proportional to the corresponding eigenvalue.}
\label{ZGB:RER:phase:diagram}
\end{center}
\end{figure}

\section{Conclusions}
\label{concl}
Here we  proposed a novel method for   sensitivity analysis of  complex stochastic dynamics,
based on the concept of Relative Entropy Rate between two stochastic processes. The method
is computationally  feasible  at the stationary  regime and involves the calculation of suitable
observables in path space for the Relative Entropy Rate and the corresponding Fisher Information
Matrix.  The stationary regime is crucial for stochastic dynamics and  can  allow us to address the
sensitivity analysis of  complex systems, including examples of processes with complex landscapes
that exhibit metastability and strong intermittency, non-reversible systems from a statistical mechanics
perspective, and  high-dimensional,  spatially distributed models.  Our proposed methods bypass these
challenges relying  on the direct Monte Carlo simulation of rigorously derived  observables for 
the Relative Entropy Rate and Fisher Information in path space rather than on the  stationary
probability distribution itself. The knowledge of the Fisher Information Matrix provides a gradient-free
method for sensitivity analysis, as well as allows   to address questions of parameter
identifiability and optimal experiment design  in complex stochastic dynamics.

Although the proposed methods are widely applicable to many stochastic models,
we demonstrated their capabilities  by focusing  on  two classes of problems. First,
on Langevin particle systems with either reversible (gradient) or non-reversible
(non-gradient) forcing,  highlighting the ability of the  method to carry out sensitivity analysis
in non-equilibrium systems; second, on spatially extended  Kinetic Monte Carlo  models,
showing that the method can  handle high-dimensional problems. In fact, we showed that
the proposed approach to sensitivity analysis  is suitable for non-equilibrium systems, 
where the structure of the stationary PDF is unknown and is typically non-Gaussian.
Finally, the sensitivity   estimators can be easily embedded in any available  molecular
simulation methods such as Kinetic Monte Carlo or Langevin solvers.

\section*{Acknowledgements}
This work was supported, in part, by the by the Office
of Advanced Scientific Computing Research, U.S. Department of Energy under contract 
DE-SC0002339  and the EU project FP7-REGPOT-2009-1  ``Archimedes Center for Modeling,
Analysis and Computation''. We would like to thank Andrew Majda, Petr Plech{\'a}{\v{c}}, Luc Rey-Bellet
and Dion Vlachos for many interesting and valuable discussions as well as Giorgos Arampatzis
for providing us with the ZGB simulation algorithm.

\appendix

\section{Sensitivity analysis on the logarithmic scale}
\label{log:sens:analysis:app}
In many applications, the model parameters differ by orders of magnitude and
the only meaningful option in order to study sensitivity analysis is to perform
relative parameter perturbations. This is done by perturbing the logarithm
of the model parameters instead of the parameters itself. Thus, utilizing the
chain rule for $\nabla_{\log\theta} f(\theta) = 
\nabla_{\theta} f(\theta) . \nabla_{\log\theta} \theta = \theta . \nabla_{\theta} f(\theta)$
where `$.$' means element by element multiplication, the logarithmic-scale
Fisher information matrix has elements: {\small
\begin{equation}
\big( \FISHERR({Q^{\log\theta}}) \big)_{i,j} = \theta_i \theta_j \big( \FISHERR({Q^{\theta}}) \big)_{i,j}
\ ,\ \ \ \ i,j=1,...,k \ .
\end{equation}}
Similarly, the logarithmic perturbation for the RER is performed by utilizing the
perturbation vector $\theta . \epsilon$ instead of $\epsilon$. Notice that \VIZ{GFIM}
continuous to be valid for the logarithmic scale. Indeed, it holds that {\small
\begin{equation}
\RELENTR{{Q^{\theta}}}{{Q^{\theta(1+\epsilon)}}} = \frac{1}{2}(\theta.\epsilon)^T
\FISHERR({Q^{\log\theta}}) (\theta.\epsilon) + O(|\theta.\epsilon|^3) \ .
\end{equation}}

\small

\begin{thebibliography}{62}%
\makeatletter
\providecommand \@ifxundefined [1]{%
 \@ifx{#1\undefined}
}%
\providecommand \@ifnum [1]{%
 \ifnum #1\expandafter \@firstoftwo
 \else \expandafter \@secondoftwo
 \fi
}%
\providecommand \@ifx [1]{%
 \ifx #1\expandafter \@firstoftwo
 \else \expandafter \@secondoftwo
 \fi
}%
\providecommand \natexlab [1]{#1}%
\providecommand \enquote  [1]{``#1''}%
\providecommand \bibnamefont  [1]{#1}%
\providecommand \bibfnamefont [1]{#1}%
\providecommand \citenamefont [1]{#1}%
\providecommand \href@noop [0]{\@secondoftwo}%
\providecommand \href [0]{\begingroup \@sanitize@url \@href}%
\providecommand \@href[1]{\@@startlink{#1}\@@href}%
\providecommand \@@href[1]{\endgroup#1\@@endlink}%
\providecommand \@sanitize@url [0]{\catcode `\\12\catcode `\$12\catcode
  `\&12\catcode `\#12\catcode `\^12\catcode `\_12\catcode `\%12\relax}%
\providecommand \@@startlink[1]{}%
\providecommand \@@endlink[0]{}%
\providecommand \url  [0]{\begingroup\@sanitize@url \@url }%
\providecommand \@url [1]{\endgroup\@href {#1}{\urlprefix }}%
\providecommand \urlprefix  [0]{URL }%
\providecommand \Eprint [0]{\href }%
\@ifxundefined \urlstyle {%
  \providecommand \doi  [0]{\begingroup \@sanitize@url \@doi}%
  \providecommand \@doi [1]{\endgroup \@@startlink {\doibase
  #1}doi:\discretionary {}{}{}#1\@@endlink }%
}{%
  \providecommand \doi  [0]{doi:\discretionary{}{}{}\begingroup
  \urlstyle{rm}\Url }%
}%
\providecommand \doibase [0]{http://dx.doi.org/}%
\providecommand \Doi [0]{\begingroup \@sanitize@url \@Doi }%
\providecommand \@Doi  [1]{\endgroup\@@startlink{\doibase#1}\@@Doi}%
\providecommand \@@Doi [1]{#1\@@endlink}%
\providecommand \selectlanguage [0]{\@gobble}%
\providecommand \bibinfo  [0]{\@secondoftwo}%
\providecommand \bibfield  [0]{\@secondoftwo}%
\providecommand \translation [1]{[#1]}%
\providecommand \BibitemOpen [0]{}%
\providecommand \bibitemStop [0]{}%
\providecommand \bibitemNoStop [0]{.\EOS\space}%
\providecommand \EOS [0]{\spacefactor3000\relax}%
\providecommand \BibitemShut  [1]{\csname bibitem#1\endcsname}%
\bibitem [{\citenamefont {Gunawan}\ \emph {et~al.}(2005)\citenamefont
  {Gunawan}, \citenamefont {Cao}, \citenamefont {Petzold},\ and\ \citenamefont
  {III}}]{Gunawan:05}%
  \BibitemOpen
  \bibfield  {author} {\bibinfo {author} {\bibfnamefont {R.}~\bibnamefont
  {Gunawan}}, \bibinfo {author} {\bibfnamefont {Y.}~\bibnamefont {Cao}},
  \bibinfo {author} {\bibfnamefont {L.}~\bibnamefont {Petzold}}, \ and\
  \bibinfo {author} {\bibfnamefont {F.~J.~D.}\ \bibnamefont {III}},\ }\bibfield
   {title} {\enquote {\bibinfo {title} {Sensitivity analysis of discrete
  stochastic systems},}\ }\href@noop {} {\bibfield  {journal} {\bibinfo
  {journal} {Biophysical Journal},\ }\textbf {\bibinfo {volume} {88}},\
  \bibinfo {pages} {2530--2540} (\bibinfo {year} {2005})}\BibitemShut {NoStop}%
\bibitem [{\citenamefont {Nakayama}\ \emph {et~al.}(1994)\citenamefont
  {Nakayama}, \citenamefont {Goyal},\ and\ \citenamefont
  {Glynn}}]{Nakayama:94}%
  \BibitemOpen
  \bibfield  {author} {\bibinfo {author} {\bibfnamefont {M.}~\bibnamefont
  {Nakayama}}, \bibinfo {author} {\bibfnamefont {A.}~\bibnamefont {Goyal}}, \
  and\ \bibinfo {author} {\bibfnamefont {P.~W.}\ \bibnamefont {Glynn}},\
  }\bibfield  {title} {\enquote {\bibinfo {title} {Likelihood ratio sensitivity
  analysis for {M}arkovian models of highly dependable systems},}\ }\href@noop
  {} {\bibfield  {journal} {\bibinfo  {journal} {Stochastic Models},\ }\textbf
  {\bibinfo {volume} {10}},\ \bibinfo {pages} {701--717} (\bibinfo {year}
  {1994})}\BibitemShut {NoStop}%
\bibitem [{\citenamefont {Plyasunov}\ and\ \citenamefont
  {Arkin}(2007)}]{Plyasunov:07}%
  \BibitemOpen
  \bibfield  {author} {\bibinfo {author} {\bibfnamefont {S.}~\bibnamefont
  {Plyasunov}}\ and\ \bibinfo {author} {\bibfnamefont {A.~P.}\ \bibnamefont
  {Arkin}},\ }\bibfield  {title} {\enquote {\bibinfo {title} {Efficient
  stochastic sensitivity analysis of discrete event systems},}\ }\href@noop {}
  {\bibfield  {journal} {\bibinfo  {journal} {J. Comp. Phys.},\ }\textbf
  {\bibinfo {volume} {221}},\ \bibinfo {pages} {724--738} (\bibinfo {year}
  {2007})}\BibitemShut {NoStop}%
\bibitem [{\citenamefont {Kim}\ \emph {et~al.}(2007)\citenamefont {Kim},
  \citenamefont {Debusschere},\ and\ \citenamefont {Najm}}]{Kim:07}%
  \BibitemOpen
  \bibfield  {author} {\bibinfo {author} {\bibfnamefont {D.}~\bibnamefont
  {Kim}}, \bibinfo {author} {\bibfnamefont {B.}~\bibnamefont {Debusschere}}, \
  and\ \bibinfo {author} {\bibfnamefont {H.}~\bibnamefont {Najm}},\ }\bibfield
  {title} {\enquote {\bibinfo {title} {Spectral methods for parametric
  sensitivity in stochastic dynamical systems},}\ }\href@noop {} {\bibfield
  {journal} {\bibinfo  {journal} {Biophysical Journal},\ }\textbf {\bibinfo
  {volume} {92}},\ \bibinfo {pages} {379--393} (\bibinfo {year}
  {2007})}\BibitemShut {NoStop}%
\bibitem [{\citenamefont {Rathinam}\ \emph {et~al.}(2010)\citenamefont
  {Rathinam}, \citenamefont {Sheppard},\ and\ \citenamefont
  {Khammash}}]{Rathinam:10}%
  \BibitemOpen
  \bibfield  {author} {\bibinfo {author} {\bibfnamefont {M.}~\bibnamefont
  {Rathinam}}, \bibinfo {author} {\bibfnamefont {P.~W.}\ \bibnamefont
  {Sheppard}}, \ and\ \bibinfo {author} {\bibfnamefont {M.}~\bibnamefont
  {Khammash}},\ }\bibfield  {title} {\enquote {\bibinfo {title} {Efficient
  computation of parameter sensitivities of discrete stochastic chemical
  reaction networks},}\ }\href@noop {} {\bibfield  {journal} {\bibinfo
  {journal} {J. Chem. Phys.},\ }\textbf {\bibinfo {volume} {132}},\ \bibinfo
  {pages} {034103--(1--13)} (\bibinfo {year} {2010})}\BibitemShut {NoStop}%
\bibitem [{\citenamefont {Wu}\ \emph {et~al.}(2012)\citenamefont {Wu},
  \citenamefont {Schmidt}, \citenamefont {Wolverton},\ and\ \citenamefont
  {Schneider}}]{Wu:12}%
  \BibitemOpen
  \bibfield  {author} {\bibinfo {author} {\bibfnamefont {C.}~\bibnamefont
  {Wu}}, \bibinfo {author} {\bibfnamefont {D.~J.}\ \bibnamefont {Schmidt}},
  \bibinfo {author} {\bibfnamefont {C.}~\bibnamefont {Wolverton}}, \ and\
  \bibinfo {author} {\bibfnamefont {W.~F.}\ \bibnamefont {Schneider}},\
  }\bibfield  {title} {\enquote {\bibinfo {title} {{Accurate
  coverage-dependence incorporated into first-principles kinetic models:
  Catalytic NO oxidation on Pt (111)}},}\ }\href@noop {} {\bibfield  {journal}
  {\bibinfo  {journal} {{J. Catalysis}},\ }\textbf {\bibinfo {volume}
  {{286}}},\ \bibinfo {pages} {{88--94}} (\bibinfo {year}
  {{2012}})}\BibitemShut {NoStop}%
\bibitem [{\citenamefont {Liu}\ \emph {et~al.}(2006)\citenamefont {Liu},
  \citenamefont {Chen},\ and\ \citenamefont {Sudjianto}}]{Liu:06}%
  \BibitemOpen
  \bibfield  {author} {\bibinfo {author} {\bibfnamefont {H.}~\bibnamefont
  {Liu}}, \bibinfo {author} {\bibfnamefont {W.}~\bibnamefont {Chen}}, \ and\
  \bibinfo {author} {\bibfnamefont {A.}~\bibnamefont {Sudjianto}},\ }\bibfield
  {title} {\enquote {\bibinfo {title} {{Relative entropy based method for
  probabilistic sensitivity analysis in engineering design}},}\ }\href@noop {}
  {\bibfield  {journal} {\bibinfo  {journal} {{J. Mechanical Design}},\
  }\textbf {\bibinfo {volume} {{128}}},\ \bibinfo {pages} {{326--336}}
  (\bibinfo {year} {{2006}})}\BibitemShut {NoStop}%
\bibitem [{\citenamefont {{N. L\"{u}dtke and S. Panzeri and M. Brown and D. S.
  Broomhead and J. Knowles and M. A. Montemurro and D. B.
  Kell}}(2008)}]{Ludtke:08}%
  \BibitemOpen
  \bibfield  {author} {\bibinfo {author} {\bibnamefont {{N. L\"{u}dtke and S.
  Panzeri and M. Brown and D. S. Broomhead and J. Knowles and M. A. Montemurro
  and D. B. Kell}}},\ }\bibfield  {title} {\enquote {\bibinfo {title}
  {Information-theoretic sensitivity analysis: a general method for credit
  assignment in complex networks},}\ }\href@noop {} {\bibfield  {journal}
  {\bibinfo  {journal} {J. R. Soc. Interface},\ }\textbf {\bibinfo {volume}
  {5}},\ \bibinfo {pages} {223--235} (\bibinfo {year} {2008})}\BibitemShut
  {NoStop}%
\bibitem [{\citenamefont {Majda}\ and\ \citenamefont
  {Gershgorin}(2010)}]{Majda:10}%
  \BibitemOpen
  \bibfield  {author} {\bibinfo {author} {\bibfnamefont {A.~J.}\ \bibnamefont
  {Majda}}\ and\ \bibinfo {author} {\bibfnamefont {B.}~\bibnamefont
  {Gershgorin}},\ }\bibfield  {title} {\enquote {\bibinfo {title} {{Quantifying
  uncertainty in climate change science through empirical information
  theory}},}\ }\href@noop {} {\bibfield  {journal} {\bibinfo  {journal} {{Proc.
  of the National Academy of Sciences}},\ }\textbf {\bibinfo {volume}
  {{107}}},\ \bibinfo {pages} {{14958--14963}} (\bibinfo {year}
  {{2010}})}\BibitemShut {NoStop}%
\bibitem [{\citenamefont {Majda}\ and\ \citenamefont
  {Gershgorin}(2011)}]{Majda:11}%
  \BibitemOpen
  \bibfield  {author} {\bibinfo {author} {\bibfnamefont {A.~J.}\ \bibnamefont
  {Majda}}\ and\ \bibinfo {author} {\bibfnamefont {B.}~\bibnamefont
  {Gershgorin}},\ }\bibfield  {title} {\enquote {\bibinfo {title} {{Improving
  model fidelity and sensitivity for complex systems through empirical
  information theory}},}\ }\href@noop {} {\bibfield  {journal} {\bibinfo
  {journal} {{Proc. of the National Academy of Sciences}},\ }\textbf {\bibinfo
  {volume} {{108}}},\ \bibinfo {pages} {{10044--10049}} (\bibinfo {year}
  {{2011}})}\BibitemShut {NoStop}%
\bibitem [{\citenamefont {Komorowski}\ \emph {et~al.}(2011)\citenamefont
  {Komorowski}, \citenamefont {Costa}, \citenamefont {Rand},\ and\
  \citenamefont {Stumpf}}]{Komorowski:11}%
  \BibitemOpen
  \bibfield  {author} {\bibinfo {author} {\bibfnamefont {M.}~\bibnamefont
  {Komorowski}}, \bibinfo {author} {\bibfnamefont {M.~J.}\ \bibnamefont
  {Costa}}, \bibinfo {author} {\bibfnamefont {D.~A.}\ \bibnamefont {Rand}}, \
  and\ \bibinfo {author} {\bibfnamefont {M.~P.~H.}\ \bibnamefont {Stumpf}},\
  }\bibfield  {title} {\enquote {\bibinfo {title} {{Sensitivity, robustness,
  and identifiability in stochastic chemical kinetics models}},}\ }\href@noop
  {} {\bibfield  {journal} {\bibinfo  {journal} {{Proc. Natl. Acad. Sci.
  USA}},\ }\textbf {\bibinfo {volume} {{108}}},\ \bibinfo {pages}
  {{8645--8650}} (\bibinfo {year} {{2011}})}\BibitemShut {NoStop}%
\bibitem [{\citenamefont {Doering}\ \emph {et~al.}(2007)\citenamefont
  {Doering}, \citenamefont {Sargsyan}, \citenamefont {Sander},\ and\
  \citenamefont {Vanden-Eijnden}}]{Doering:07}%
  \BibitemOpen
  \bibfield  {author} {\bibinfo {author} {\bibfnamefont {C.~R.}\ \bibnamefont
  {Doering}}, \bibinfo {author} {\bibfnamefont {K.~V.}\ \bibnamefont
  {Sargsyan}}, \bibinfo {author} {\bibfnamefont {L.~M.}\ \bibnamefont
  {Sander}}, \ and\ \bibinfo {author} {\bibfnamefont {E.}~\bibnamefont
  {Vanden-Eijnden}},\ }\bibfield  {title} {\enquote {\bibinfo {title}
  {Asymptotics of rare events in birth--death processes bypassing the exact
  solutions},}\ }\href@noop {} {\bibfield  {journal} {\bibinfo  {journal}
  {Journal of Physics: Condensed Matter},\ }\textbf {\bibinfo {volume} {19}},\
  \bibinfo {pages} {065145--(1--12)} (\bibinfo {year} {2007})}\BibitemShut
  {NoStop}%
\bibitem [{\citenamefont {Hanggi}\ \emph {et~al.}(1984)\citenamefont {Hanggi},
  \citenamefont {Grabert}, \citenamefont {Talkner},\ and\ \citenamefont
  {Thomas}}]{Hanggi:84}%
  \BibitemOpen
  \bibfield  {author} {\bibinfo {author} {\bibfnamefont {P.}~\bibnamefont
  {Hanggi}}, \bibinfo {author} {\bibfnamefont {H.}~\bibnamefont {Grabert}},
  \bibinfo {author} {\bibfnamefont {P.}~\bibnamefont {Talkner}}, \ and\
  \bibinfo {author} {\bibfnamefont {H.}~\bibnamefont {Thomas}},\ }\bibfield
  {title} {\enquote {\bibinfo {title} {Bistable systems: {M}aster equation
  versus {F}okker-{P}lanck modeling},}\ }\href@noop {} {\bibfield  {journal}
  {\bibinfo  {journal} {Phys. Rev. A},\ }\textbf {\bibinfo {volume} {29}},\
  \bibinfo {pages} {371--378} (\bibinfo {year} {1984})}\BibitemShut {NoStop}%
\bibitem [{\citenamefont {Katsoulakis}\ \emph {et~al.}(2006)\citenamefont
  {Katsoulakis}, \citenamefont {Majda},\ and\ \citenamefont {Sopasakis}}]{KMS}%
  \BibitemOpen
  \bibfield  {author} {\bibinfo {author} {\bibfnamefont {M.~A.}\ \bibnamefont
  {Katsoulakis}}, \bibinfo {author} {\bibfnamefont {A.~J.}\ \bibnamefont
  {Majda}}, \ and\ \bibinfo {author} {\bibfnamefont {A.}~\bibnamefont
  {Sopasakis}},\ }\bibfield  {title} {\enquote {\bibinfo {title}
  {Intermittency, metastability and coarse graining for coupled
  deterministic-stochastic lattice systems},}\ }\href@noop {} {\bibfield
  {journal} {\bibinfo  {journal} {Nonlinearity},\ }\textbf {\bibinfo {volume}
  {19}},\ \bibinfo {pages} {1021--1047} (\bibinfo {year} {2006})}\BibitemShut
  {NoStop}%
\bibitem [{\citenamefont {Liggett}(1985)}]{Liggett:85}%
  \BibitemOpen
  \bibfield  {author} {\bibinfo {author} {\bibfnamefont {T.}~\bibnamefont
  {Liggett}},\ }\href@noop {} {\emph {\bibinfo {title} {Interacting particle
  systems}}}\ (\bibinfo  {publisher} {Springer - Berlin},\ \bibinfo {year}
  {1985})\BibitemShut {NoStop}%
\bibitem [{\citenamefont {Chatterjee}\ and\ \citenamefont
  {Vlachos}(2007)}]{Chatterjee:07}%
  \BibitemOpen
  \bibfield  {author} {\bibinfo {author} {\bibfnamefont {A.}~\bibnamefont
  {Chatterjee}}\ and\ \bibinfo {author} {\bibfnamefont {D.~G.}\ \bibnamefont
  {Vlachos}},\ }\bibfield  {title} {\enquote {\bibinfo {title} {An overview of
  spatial microscopic and accelerated kinetic {M}onte {C}arlo methods for
  materials' simulation},}\ }\href@noop {} {\bibfield  {journal} {\bibinfo
  {journal} {J. Computer-Aided Materials Design},\ }\textbf {\bibinfo {volume}
  {14}},\ \bibinfo {pages} {253--308} (\bibinfo {year} {2007})}\BibitemShut
  {NoStop}%
\bibitem [{\citenamefont {Cover}\ and\ \citenamefont
  {Thomas}(1991)}]{Cover:91}%
  \BibitemOpen
  \bibfield  {author} {\bibinfo {author} {\bibfnamefont {T.~M.}\ \bibnamefont
  {Cover}}\ and\ \bibinfo {author} {\bibfnamefont {J.~A.}\ \bibnamefont
  {Thomas}},\ }\href@noop {} {\emph {\bibinfo {title} {Elements of Information
  Theory}}}\ (\bibinfo  {publisher} {Wiley Series in Telecommunications},\
  \bibinfo {year} {1991})\BibitemShut {NoStop}%
\bibitem [{\citenamefont {Katsoulakis}\ and\ \citenamefont
  {Trashorras}(2006)}]{Kats:Trashorras:06}%
  \BibitemOpen
  \bibfield  {author} {\bibinfo {author} {\bibfnamefont {M.~A.}\ \bibnamefont
  {Katsoulakis}}\ and\ \bibinfo {author} {\bibfnamefont {J.}~\bibnamefont
  {Trashorras}},\ }\bibfield  {title} {\enquote {\bibinfo {title} {Information
  loss in coarse-graining of stochastic particle dynamics},}\ }\href@noop {}
  {\bibfield  {journal} {\bibinfo  {journal} {J. Stat. Phys.},\ }\textbf
  {\bibinfo {volume} {122}},\ \bibinfo {pages} {115--135} (\bibinfo {year}
  {2006})}\BibitemShut {NoStop}%
\bibitem [{\citenamefont {Katsoulakis}\ \emph {et~al.}(2007)\citenamefont
  {Katsoulakis}, \citenamefont {Rey-Bellet}, \citenamefont {Plech\'a\v{c}},\
  and\ \citenamefont {Tsagkarogiannis}}]{Kats:ReyBellet:07}%
  \BibitemOpen
  \bibfield  {author} {\bibinfo {author} {\bibfnamefont {M.~A.}\ \bibnamefont
  {Katsoulakis}}, \bibinfo {author} {\bibfnamefont {L.}~\bibnamefont
  {Rey-Bellet}}, \bibinfo {author} {\bibfnamefont {P.}~\bibnamefont
  {Plech\'a\v{c}}}, \ and\ \bibinfo {author} {\bibfnamefont {D.~K.}\
  \bibnamefont {Tsagkarogiannis}},\ }\bibfield  {title} {\enquote {\bibinfo
  {title} {Coarse-graining schemes and a posteriori error estimates for
  stochastic lattice systems},}\ }\href@noop {} {\bibfield  {journal} {\bibinfo
   {journal} {ESAIM-Math. Model. Num. Analysis},\ }\textbf {\bibinfo {volume}
  {41}},\ \bibinfo {pages} {627--660} (\bibinfo {year} {2007})}\BibitemShut
  {NoStop}%
\bibitem [{\citenamefont {Arnst}\ and\ \citenamefont
  {Ghanem}(2008)}]{Arnst:08}%
  \BibitemOpen
  \bibfield  {author} {\bibinfo {author} {\bibfnamefont {M.}~\bibnamefont
  {Arnst}}\ and\ \bibinfo {author} {\bibfnamefont {R.}~\bibnamefont {Ghanem}},\
  }\bibfield  {title} {\enquote {\bibinfo {title} {{Probabilistic equivalence
  and stochastic model reduction in multiscale analysis}},}\ }\href@noop {}
  {\bibfield  {journal} {\bibinfo  {journal} {{Comp. methods in applied mech.
  and eng.}},\ }\textbf {\bibinfo {volume} {{197}}},\ \bibinfo {pages}
  {{3584--3592}} (\bibinfo {year} {{2008}})}\BibitemShut {NoStop}%
\bibitem [{\citenamefont {Kipnis}\ and\ \citenamefont
  {Landim}(1999)}]{Kipnis:99}%
  \BibitemOpen
  \bibfield  {author} {\bibinfo {author} {\bibfnamefont {C.}~\bibnamefont
  {Kipnis}}\ and\ \bibinfo {author} {\bibfnamefont {C.}~\bibnamefont
  {Landim}},\ }\href@noop {} {\emph {\bibinfo {title} {Scaling Limits of
  Interacting Particle Systems}}}\ (\bibinfo  {publisher} {Springer-Verlag},\
  \bibinfo {year} {1999})\BibitemShut {NoStop}%
\bibitem [{\citenamefont {Maes}\ \emph {et~al.}(2000)\citenamefont {Maes},
  \citenamefont {Redig},\ and\ \citenamefont {Moffaert}}]{Maes:00}%
  \BibitemOpen
  \bibfield  {author} {\bibinfo {author} {\bibfnamefont {C.}~\bibnamefont
  {Maes}}, \bibinfo {author} {\bibfnamefont {F.}~\bibnamefont {Redig}}, \ and\
  \bibinfo {author} {\bibfnamefont {A.~V.}\ \bibnamefont {Moffaert}},\
  }\bibfield  {title} {\enquote {\bibinfo {title} {On the definition of entropy
  production, via examples},}\ }\href@noop {} {\bibfield  {journal} {\bibinfo
  {journal} {J. Math. Phys.},\ }\textbf {\bibinfo {volume} {41}},\ \bibinfo
  {pages} {1528--1553} (\bibinfo {year} {2000})}\BibitemShut {NoStop}%
\bibitem [{\citenamefont {Abramov}\ \emph {et~al.}(2005)\citenamefont
  {Abramov}, \citenamefont {Grote},\ and\ \citenamefont {Majda}}]{Abramov:05}%
  \BibitemOpen
  \bibfield  {author} {\bibinfo {author} {\bibfnamefont {R.~V.}\ \bibnamefont
  {Abramov}}, \bibinfo {author} {\bibfnamefont {M.~J.}\ \bibnamefont {Grote}},
  \ and\ \bibinfo {author} {\bibfnamefont {A.~J.}\ \bibnamefont {Majda}},\
  }\href@noop {} {\emph {\bibinfo {title} {Information Theory and Stochastics
  for Multiscale Nonlinear Systems}}}\ (\bibinfo  {publisher} {CRM Monograph
  Series},\ \bibinfo {year} {2005})\BibitemShut {NoStop}%
\bibitem [{\citenamefont {Rao}\ \emph {et~al.}(2009)\citenamefont {Rao},
  \citenamefont {Imam}, \citenamefont {Ramanathan},\ and\ \citenamefont
  {Pushpavanam}}]{Rao:10}%
  \BibitemOpen
  \bibfield  {author} {\bibinfo {author} {\bibfnamefont {S.~K.}\ \bibnamefont
  {Rao}}, \bibinfo {author} {\bibfnamefont {R.}~\bibnamefont {Imam}}, \bibinfo
  {author} {\bibfnamefont {K.}~\bibnamefont {Ramanathan}}, \ and\ \bibinfo
  {author} {\bibfnamefont {S.}~\bibnamefont {Pushpavanam}},\ }\bibfield
  {title} {\enquote {\bibinfo {title} {{Sensitivity Analysis and Kinetic
  Parameter Estimation in a Three Way Catalytic Converter}},}\ }\href@noop {}
  {\bibfield  {journal} {\bibinfo  {journal} {{Industrial \& Engineering
  Chemistry Research}},\ }\textbf {\bibinfo {volume} {{48}}},\ \bibinfo {pages}
  {{3779--3790}} (\bibinfo {year} {{2009}})}\BibitemShut {NoStop}%
\bibitem [{\citenamefont {Braatz}\ \emph {et~al.}(2006)\citenamefont {Braatz},
  \citenamefont {Alkire}, \citenamefont {Seebauer}, \citenamefont {Rusli},
  \citenamefont {Gunawan}, \citenamefont {Drews}, \citenamefont {Li},\ and\
  \citenamefont {He}}]{Braatz:06}%
  \BibitemOpen
  \bibfield  {author} {\bibinfo {author} {\bibfnamefont {R.}~\bibnamefont
  {Braatz}}, \bibinfo {author} {\bibfnamefont {R.}~\bibnamefont {Alkire}},
  \bibinfo {author} {\bibfnamefont {E.}~\bibnamefont {Seebauer}}, \bibinfo
  {author} {\bibfnamefont {E.}~\bibnamefont {Rusli}}, \bibinfo {author}
  {\bibfnamefont {R.}~\bibnamefont {Gunawan}}, \bibinfo {author} {\bibfnamefont
  {T.}~\bibnamefont {Drews}}, \bibinfo {author} {\bibfnamefont
  {X.}~\bibnamefont {Li}}, \ and\ \bibinfo {author} {\bibfnamefont
  {Y.}~\bibnamefont {He}},\ }\bibfield  {title} {\enquote {\bibinfo {title}
  {Perspectives on the design and control of multiscale systems},}\ }\href@noop
  {} {\bibfield  {journal} {\bibinfo  {journal} {J. Proc. Control},\ }\textbf
  {\bibinfo {volume} {16}},\ \bibinfo {pages} {193--204} (\bibinfo {year}
  {2006})}\BibitemShut {NoStop}%
\bibitem [{\citenamefont {Rothenberg}(1971)}]{Rothenberg:71}%
  \BibitemOpen
  \bibfield  {author} {\bibinfo {author} {\bibfnamefont {T.}~\bibnamefont
  {Rothenberg}},\ }\bibfield  {title} {\enquote {\bibinfo {title}
  {{Identification in parametric models}},}\ }\href@noop {} {\bibfield
  {journal} {\bibinfo  {journal} {{ECONOMETRICA}},\ }\textbf {\bibinfo {volume}
  {{39}}},\ \bibinfo {pages} {{577--0591}} (\bibinfo {year}
  {{1971}})}\BibitemShut {NoStop}%
\bibitem [{\citenamefont {Emery}\ and\ \citenamefont
  {Nenarokomov}(1998)}]{Emery:98}%
  \BibitemOpen
  \bibfield  {author} {\bibinfo {author} {\bibfnamefont {A.~F.}\ \bibnamefont
  {Emery}}\ and\ \bibinfo {author} {\bibfnamefont {A.~V.}\ \bibnamefont
  {Nenarokomov}},\ }\bibfield  {title} {\enquote {\bibinfo {title} {{Optimal
  experiment design}},}\ }\href@noop {} {\bibfield  {journal} {\bibinfo
  {journal} {{Measurement Science \& Technology}},\ }\textbf {\bibinfo {volume}
  {{9}}},\ \bibinfo {pages} {{864--876}} (\bibinfo {year}
  {{1998}})}\BibitemShut {NoStop}%
\bibitem [{\citenamefont {Prasad}\ \emph {et~al.}(2010)\citenamefont {Prasad},
  \citenamefont {Karim}, \citenamefont {Ulissi}, \citenamefont {Zagrobelny},\
  and\ \citenamefont {Vlachos}}]{Prasad:10}%
  \BibitemOpen
  \bibfield  {author} {\bibinfo {author} {\bibfnamefont {V.}~\bibnamefont
  {Prasad}}, \bibinfo {author} {\bibfnamefont {A.~M.}\ \bibnamefont {Karim}},
  \bibinfo {author} {\bibfnamefont {Z.}~\bibnamefont {Ulissi}}, \bibinfo
  {author} {\bibfnamefont {M.}~\bibnamefont {Zagrobelny}}, \ and\ \bibinfo
  {author} {\bibfnamefont {D.~G.}\ \bibnamefont {Vlachos}},\ }\bibfield
  {title} {\enquote {\bibinfo {title} {{High throughput multiscale modeling for
  design of experiments, catalysts, and reactors: Application to hydrogen
  production from ammonia}},}\ }\href@noop {} {\bibfield  {journal} {\bibinfo
  {journal} {{Chem. Eng. Sci.}},\ }\textbf {\bibinfo {volume} {{65}}},\
  \bibinfo {pages} {{240--246}} (\bibinfo {year} {{2010}})}\BibitemShut
  {NoStop}%
\bibitem [{\citenamefont {Gillespie}(2001)}]{Gillespie:01}%
  \BibitemOpen
  \bibfield  {author} {\bibinfo {author} {\bibfnamefont {D.~T.}\ \bibnamefont
  {Gillespie}},\ }\bibfield  {title} {\enquote {\bibinfo {title} {Approximated
  accelerated stochastic simulation of chemically reacting systems},}\
  }\href@noop {} {\bibfield  {journal} {\bibinfo  {journal} {J. Chem. Phys.},\
  }\textbf {\bibinfo {volume} {115}},\ \bibinfo {pages} {1716--1733} (\bibinfo
  {year} {2001})}\BibitemShut {NoStop}%
\bibitem [{\citenamefont {Plimpton}\ \emph {et~al.}(2009)\citenamefont
  {Plimpton}, \citenamefont {Battaile}, \citenamefont {Chandross},
  \citenamefont {Holm}, \citenamefont {Thompson}, \citenamefont {Tikare},
  \citenamefont {Wagner}, \citenamefont {Webb}, \citenamefont {Zhou},
  \citenamefont {Cardona},\ and\ \citenamefont {Slepoy}}]{SPPARKS}%
  \BibitemOpen
  \bibfield  {author} {\bibinfo {author} {\bibfnamefont {S.}~\bibnamefont
  {Plimpton}}, \bibinfo {author} {\bibfnamefont {C.}~\bibnamefont {Battaile}},
  \bibinfo {author} {\bibfnamefont {M.}~\bibnamefont {Chandross}}, \bibinfo
  {author} {\bibfnamefont {L.}~\bibnamefont {Holm}}, \bibinfo {author}
  {\bibfnamefont {A.}~\bibnamefont {Thompson}}, \bibinfo {author}
  {\bibfnamefont {V.}~\bibnamefont {Tikare}}, \bibinfo {author} {\bibfnamefont
  {G.}~\bibnamefont {Wagner}}, \bibinfo {author} {\bibfnamefont
  {E.}~\bibnamefont {Webb}}, \bibinfo {author} {\bibfnamefont {X.}~\bibnamefont
  {Zhou}}, \bibinfo {author} {\bibfnamefont {C.~G.}\ \bibnamefont {Cardona}}, \
  and\ \bibinfo {author} {\bibfnamefont {A.}~\bibnamefont {Slepoy}},\
  }\href@noop {} {\enquote {\bibinfo {title} {{Crossing the Mesoscale No-Man's
  Land via Parallel Kinetic Monte Carlo}},}\ }\bibinfo {type} {Tech. Rep.}\
  (\bibinfo  {institution} {Sandia National Laboratory},\ \bibinfo {year}
  {2009})\BibitemShut {NoStop}%
\bibitem [{\citenamefont {Plimpton}\ and\ \citenamefont
  {Thompson}(2012)}]{LAMMPS}%
  \BibitemOpen
  \bibfield  {author} {\bibinfo {author} {\bibfnamefont {S.~J.}\ \bibnamefont
  {Plimpton}}\ and\ \bibinfo {author} {\bibfnamefont {A.~P.}\ \bibnamefont
  {Thompson}},\ }\bibfield  {title} {\enquote {\bibinfo {title} {{Computational
  aspects of many-body potentials}},}\ }\href@noop {} {\bibfield  {journal}
  {\bibinfo  {journal} {{MRS Bull.}},\ }\textbf {\bibinfo {volume} {{37}}},\
  \bibinfo {pages} {{513--521}} (\bibinfo {year} {{2012}})}\BibitemShut
  {NoStop}%
\bibitem [{\citenamefont {{Arampatzis}}\ \emph {et~al.}(2012)\citenamefont
  {{Arampatzis}}, \citenamefont {{Katsoulakis}}, \citenamefont {{Plechac}},
  \citenamefont {{Taufer}},\ and\ \citenamefont {{Xu}}}]{Arampatzis:12}%
  \BibitemOpen
  \bibfield  {author} {\bibinfo {author} {\bibfnamefont {G.}~\bibnamefont
  {{Arampatzis}}}, \bibinfo {author} {\bibfnamefont {M.~A.}\ \bibnamefont
  {{Katsoulakis}}}, \bibinfo {author} {\bibfnamefont {P.}~\bibnamefont
  {{Plechac}}}, \bibinfo {author} {\bibfnamefont {M.}~\bibnamefont {{Taufer}}},
  \ and\ \bibinfo {author} {\bibfnamefont {L.}~\bibnamefont {{Xu}}},\
  }\bibfield  {title} {\enquote {\bibinfo {title} {{Hierarchical
  fractional-step approximations and parallel kinetic {M}onte {C}arlo
  algorithms}},}\ }\href@noop {} {\bibfield  {journal} {\bibinfo  {journal} {J.
  Comp. Phys.},\ \bibinfo {pages} {7795--7814}} (\bibinfo {year}
  {2012})}\BibitemShut {NoStop}%
\bibitem [{\citenamefont {Hansen}\ and\ \citenamefont
  {Neurock}(2000)}]{Hansen:00}%
  \BibitemOpen
  \bibfield  {author} {\bibinfo {author} {\bibfnamefont {E.~W.}\ \bibnamefont
  {Hansen}}\ and\ \bibinfo {author} {\bibfnamefont {M.}~\bibnamefont
  {Neurock}},\ }\bibfield  {title} {\enquote {\bibinfo {title}
  {First-principles-based {Monte Carlo} simulation of ethylene hydrogenation
  kinetics on {Pd}},}\ }\href@noop {} {\bibfield  {journal} {\bibinfo
  {journal} {J. Catalysis},\ }\textbf {\bibinfo {volume} {196}},\ \bibinfo
  {pages} {241--252} (\bibinfo {year} {2000})}\BibitemShut {NoStop}%
\bibitem [{\citenamefont {Meskine}\ \emph {et~al.}(2009)\citenamefont
  {Meskine}, \citenamefont {Matera}, \citenamefont {Scheffler}, \citenamefont
  {Reuter},\ and\ \citenamefont {Metiu}}]{Meskine:09}%
  \BibitemOpen
  \bibfield  {author} {\bibinfo {author} {\bibfnamefont {H.}~\bibnamefont
  {Meskine}}, \bibinfo {author} {\bibfnamefont {S.}~\bibnamefont {Matera}},
  \bibinfo {author} {\bibfnamefont {M.}~\bibnamefont {Scheffler}}, \bibinfo
  {author} {\bibfnamefont {K.}~\bibnamefont {Reuter}}, \ and\ \bibinfo {author}
  {\bibfnamefont {H.}~\bibnamefont {Metiu}},\ }\bibfield  {title} {\enquote
  {\bibinfo {title} {Examination of the concept of degree of rate control by
  first-principles kinetic {Monte Carlo} simulations},}\ }\href@noop {}
  {\bibfield  {journal} {\bibinfo  {journal} {Surf. Science},\ }\textbf
  {\bibinfo {volume} {603(10-12)}},\ \bibinfo {pages} {1724--1730} (\bibinfo
  {year} {2009})}\BibitemShut {NoStop}%
\bibitem [{\citenamefont {Stamatakis}\ \emph {et~al.}(2011)\citenamefont
  {Stamatakis}, \citenamefont {Chen},\ and\ \citenamefont
  {Vlachos}}]{Stamatakis:11}%
  \BibitemOpen
  \bibfield  {author} {\bibinfo {author} {\bibfnamefont {M.}~\bibnamefont
  {Stamatakis}}, \bibinfo {author} {\bibfnamefont {Y.}~\bibnamefont {Chen}}, \
  and\ \bibinfo {author} {\bibfnamefont {D.~G.}\ \bibnamefont {Vlachos}},\
  }\bibfield  {title} {\enquote {\bibinfo {title} {First-principles-based
  kinetic {M}onte {C}arlo simulation of the structure sensitivity of the
  water-gas shift reaction on {P}latinum surfaces},}\ }\href@noop {} {\bibfield
   {journal} {\bibinfo  {journal} {Journal of Physical Chemistry C},\ }\textbf
  {\bibinfo {volume} {115(50)}},\ \bibinfo {pages} {24750--24762} (\bibinfo
  {year} {2011})}\BibitemShut {NoStop}%
\bibitem [{Note1()}]{Note1}%
  \BibitemOpen
  \bibinfo {note} {This Lebesgue continuity assumption is merely for
  simplification purposes and it can be easily generalized.}\BibitemShut
  {Stop}%
\bibitem [{\citenamefont {Liptser}\ and\ \citenamefont
  {Shiryaev}(1977)}]{Liptser:77}%
  \BibitemOpen
  \bibfield  {author} {\bibinfo {author} {\bibfnamefont {R.~S.}\ \bibnamefont
  {Liptser}}\ and\ \bibinfo {author} {\bibfnamefont {A.~N.}\ \bibnamefont
  {Shiryaev}},\ }\href@noop {} {\emph {\bibinfo {title} {Statistics of Random
  Processes: {I \& II}}}}\ (\bibinfo  {publisher} {Springer},\ \bibinfo {year}
  {1977})\BibitemShut {NoStop}%
\bibitem [{\citenamefont {Dumitrescu}(1988)}]{Dumitrescu:88}%
  \BibitemOpen
  \bibfield  {author} {\bibinfo {author} {\bibfnamefont {M.~E.}\ \bibnamefont
  {Dumitrescu}},\ }\bibfield  {title} {\enquote {\bibinfo {title} {Some
  informational properties of markov pure-jump processes},}\ }\href@noop {}
  {\bibfield  {journal} {\bibinfo  {journal} {C. P. Matematiky},\ }\textbf
  {\bibinfo {volume} {113}},\ \bibinfo {pages} {429--434} (\bibinfo {year}
  {1988})}\BibitemShut {NoStop}%
\bibitem [{\citenamefont {Wen}\ and\ \citenamefont {Weiguo}(1996)}]{Wen:96}%
  \BibitemOpen
  \bibfield  {author} {\bibinfo {author} {\bibfnamefont {L.}~\bibnamefont
  {Wen}}\ and\ \bibinfo {author} {\bibfnamefont {Y.}~\bibnamefont {Weiguo}},\
  }\bibfield  {title} {\enquote {\bibinfo {title} {An extension of
  {Shannon-McMillan} theorem and some limit properties for nonhomogeneous
  {M}arkov chains},}\ }\href@noop {} {\bibfield  {journal} {\bibinfo  {journal}
  {Stochastic Processes and their Applications},\ }\textbf {\bibinfo {volume}
  {61}},\ \bibinfo {pages} {129--145} (\bibinfo {year} {1996})}\BibitemShut
  {NoStop}%
\bibitem [{\citenamefont {Limnios}\ and\ \citenamefont
  {Oprisan}(2001)}]{Limnios:01}%
  \BibitemOpen
  \bibfield  {author} {\bibinfo {author} {\bibfnamefont {N.}~\bibnamefont
  {Limnios}}\ and\ \bibinfo {author} {\bibfnamefont {G.}~\bibnamefont
  {Oprisan}},\ }\href@noop {} {\emph {\bibinfo {title} {Semi-Markov Processes
  and Reliability}}}\ (\bibinfo  {publisher} {Springer},\ \bibinfo {year}
  {2001})\BibitemShut {NoStop}%
\bibitem [{\citenamefont {Lutz}\ and\ \citenamefont
  {Kiremidjian}(1993)}]{Lutz:93}%
  \BibitemOpen
  \bibfield  {author} {\bibinfo {author} {\bibfnamefont {K.~A.}\ \bibnamefont
  {Lutz}}\ and\ \bibinfo {author} {\bibfnamefont {A.~S.}\ \bibnamefont
  {Kiremidjian}},\ }\href@noop {} {\enquote {\bibinfo {title} {A generalized
  semi-{M}arkov process for modeling spatially and temporally dependent
  earthquakes},}\ }\bibinfo {type} {Tech. Rep.}\ (\bibinfo  {institution} {The
  J. A. Blume Earthquake Engineering Center},\ \bibinfo {year}
  {1993})\BibitemShut {NoStop}%
\bibitem [{\citenamefont {Janssen}\ and\ \citenamefont
  {Manca}(2006)}]{Janssen:06}%
  \BibitemOpen
  \bibfield  {author} {\bibinfo {author} {\bibfnamefont {J.}~\bibnamefont
  {Janssen}}\ and\ \bibinfo {author} {\bibfnamefont {R.}~\bibnamefont
  {Manca}},\ }\href@noop {} {\emph {\bibinfo {title} {Applied Semi-Markov
  Processes}}}\ (\bibinfo  {publisher} {Springer},\ \bibinfo {year}
  {2006})\BibitemShut {NoStop}%
\bibitem [{\citenamefont {Girardin}\ and\ \citenamefont
  {Limnios}(2003)}]{Girardin:03}%
  \BibitemOpen
  \bibfield  {author} {\bibinfo {author} {\bibfnamefont {V.}~\bibnamefont
  {Girardin}}\ and\ \bibinfo {author} {\bibfnamefont {N.}~\bibnamefont
  {Limnios}},\ }\bibfield  {title} {\enquote {\bibinfo {title} {On the entropy
  for semi-{M}arkov processes},}\ }\href@noop {} {\bibfield  {journal}
  {\bibinfo  {journal} {J. Appl. Probab.},\ }\textbf {\bibinfo {volume} {40}},\
  \bibinfo {pages} {1060--1068} (\bibinfo {year} {2003})}\BibitemShut {NoStop}%
\bibitem [{\citenamefont {Gillespie}(1976)}]{Gillespie:76}%
  \BibitemOpen
  \bibfield  {author} {\bibinfo {author} {\bibfnamefont {D.~T.}\ \bibnamefont
  {Gillespie}},\ }\bibfield  {title} {\enquote {\bibinfo {title} {A general
  method for numerically simulating the stochastic time evolution of coupled
  chemical reactions},}\ }\href@noop {} {\bibfield  {journal} {\bibinfo
  {journal} {J. Comp. Phys.},\ }\textbf {\bibinfo {volume} {22}},\ \bibinfo
  {pages} {403--434} (\bibinfo {year} {1976})}\BibitemShut {NoStop}%
\bibitem [{\citenamefont {{Schl\"{o}gl}}(1972)}]{Schlogl:72}%
  \BibitemOpen
  \bibfield  {author} {\bibinfo {author} {\bibfnamefont {F.}~\bibnamefont
  {{Schl\"{o}gl}}},\ }\bibfield  {title} {\enquote {\bibinfo {title} {Chemical
  reaction models for nonequilibrium phase transition},}\ }\href@noop {}
  {\bibfield  {journal} {\bibinfo  {journal} {Z. Physik},\ }\textbf {\bibinfo
  {volume} {253}},\ \bibinfo {pages} {147--161} (\bibinfo {year}
  {1972})}\BibitemShut {NoStop}%
\bibitem [{\citenamefont {Vellela}\ and\ \citenamefont
  {Qian}(2009)}]{Vellela:09}%
  \BibitemOpen
  \bibfield  {author} {\bibinfo {author} {\bibfnamefont {M.}~\bibnamefont
  {Vellela}}\ and\ \bibinfo {author} {\bibfnamefont {H.}~\bibnamefont {Qian}},\
  }\bibfield  {title} {\enquote {\bibinfo {title} {Stochastic dynamics and
  non-equilibrium thermodynamics of a bistable chemical system: the
  {Schl\"{o}gl} model revisited},}\ }\href@noop {} {\bibfield  {journal}
  {\bibinfo  {journal} {J. R. Soc. Interface},\ }\textbf {\bibinfo {volume}
  {6}},\ \bibinfo {pages} {925--940} (\bibinfo {year} {2009})}\BibitemShut
  {NoStop}%
\bibitem [{\citenamefont {Gardiner}(1985)}]{Gardiner:85}%
  \BibitemOpen
  \bibfield  {author} {\bibinfo {author} {\bibfnamefont {C.}~\bibnamefont
  {Gardiner}},\ }\href@noop {} {\emph {\bibinfo {title} {Handbook of Stochastic
  Methods: for Physics, Chemistry and the Natural Sciences}}}\ (\bibinfo
  {publisher} {Springer},\ \bibinfo {year} {1985})\BibitemShut {NoStop}%
\bibitem [{\citenamefont {Degasperi}\ and\ \citenamefont
  {Gilmore}(2008)}]{Degasperi:08}%
  \BibitemOpen
  \bibfield  {author} {\bibinfo {author} {\bibfnamefont {A.}~\bibnamefont
  {Degasperi}}\ and\ \bibinfo {author} {\bibfnamefont {S.}~\bibnamefont
  {Gilmore}},\ }\bibfield  {title} {\enquote {\bibinfo {title} {Sensitivity
  analysis of stochastic models of bistable biochemical reactions},}\
  }\href@noop {} {\bibfield  {journal} {\bibinfo  {journal} {SFM 2008},\
  \bibinfo {pages} {1--20}} (\bibinfo {year} {2008})}\BibitemShut {NoStop}%
\bibitem [{\citenamefont {Rapaport}(1995)}]{Rapaport:95}%
  \BibitemOpen
  \bibfield  {author} {\bibinfo {author} {\bibfnamefont {D.~C.}\ \bibnamefont
  {Rapaport}},\ }\href@noop {} {\emph {\bibinfo {title} {The Art of Molecular
  Dynamics Simulations}}}\ (\bibinfo  {publisher} {Cambridge University
  Press},\ \bibinfo {year} {1995})\BibitemShut {NoStop}%
\bibitem [{\citenamefont {Schlick}(2002)}]{Schlick:02}%
  \BibitemOpen
  \bibfield  {author} {\bibinfo {author} {\bibfnamefont {T.}~\bibnamefont
  {Schlick}},\ }\href@noop {} {\emph {\bibinfo {title} {Molecular Modeling and
  Simulation}}}\ (\bibinfo  {publisher} {Springer},\ \bibinfo {year}
  {2002})\BibitemShut {NoStop}%
\bibitem [{\citenamefont {Frenkel}\ and\ \citenamefont
  {Smit}(2002)}]{Frenkel:02}%
  \BibitemOpen
  \bibfield  {author} {\bibinfo {author} {\bibfnamefont {D.}~\bibnamefont
  {Frenkel}}\ and\ \bibinfo {author} {\bibfnamefont {B.}~\bibnamefont {Smit}},\
  }\href@noop {} {\emph {\bibinfo {title} {Understanding Molecular Simulation,
  From Algorithms to Applications}}}\ (\bibinfo  {publisher} {Academic Press},\
  \bibinfo {year} {2002})\BibitemShut {NoStop}%
\bibitem [{\citenamefont {Lelievre}\ \emph {et~al.}(2010)\citenamefont
  {Lelievre}, \citenamefont {Rousset},\ and\ \citenamefont
  {Stoltz}}]{Lelievre:10}%
  \BibitemOpen
  \bibfield  {author} {\bibinfo {author} {\bibfnamefont {T.}~\bibnamefont
  {Lelievre}}, \bibinfo {author} {\bibfnamefont {M.}~\bibnamefont {Rousset}}, \
  and\ \bibinfo {author} {\bibfnamefont {G.}~\bibnamefont {Stoltz}},\
  }\href@noop {} {\emph {\bibinfo {title} {Free energy computations: a
  mathematical perspective}}}\ (\bibinfo  {publisher} {Imperial College
  Press},\ \bibinfo {year} {2010})\BibitemShut {NoStop}%
\bibitem [{\citenamefont {Ebeling}\ and\ \citenamefont
  {Schimansky-Geier}(2008)}]{Ebeling:08}%
  \BibitemOpen
  \bibfield  {author} {\bibinfo {author} {\bibfnamefont {W.}~\bibnamefont
  {Ebeling}}\ and\ \bibinfo {author} {\bibfnamefont {L.}~\bibnamefont
  {Schimansky-Geier}},\ }\bibfield  {title} {\enquote {\bibinfo {title} {Swarm
  dynamics, attractors and bifurcations of active {B}rownian motion},}\
  }\href@noop {} {\bibfield  {journal} {\bibinfo  {journal} {Eur. Phys. J.
  Special Topics},\ }\textbf {\bibinfo {volume} {157}},\ \bibinfo {pages}
  {17--31} (\bibinfo {year} {2008})}\BibitemShut {NoStop}%
\bibitem [{\citenamefont {Scemama}\ \emph {et~al.}(2006)\citenamefont
  {Scemama}, \citenamefont {{Leli\`{e}vre}}, \citenamefont {Stoltz},
  \citenamefont {{Canc\`{e}s}},\ and\ \citenamefont {Caffarel}}]{Scemama:06}%
  \BibitemOpen
  \bibfield  {author} {\bibinfo {author} {\bibfnamefont {A.}~\bibnamefont
  {Scemama}}, \bibinfo {author} {\bibfnamefont {T.}~\bibnamefont
  {{Leli\`{e}vre}}}, \bibinfo {author} {\bibfnamefont {G.}~\bibnamefont
  {Stoltz}}, \bibinfo {author} {\bibfnamefont {E.}~\bibnamefont
  {{Canc\`{e}s}}}, \ and\ \bibinfo {author} {\bibfnamefont {M.}~\bibnamefont
  {Caffarel}},\ }\bibfield  {title} {\enquote {\bibinfo {title} {An efficient
  sampling algorithm for variational {Monte-Carlo}},}\ }\href@noop {}
  {\bibfield  {journal} {\bibinfo  {journal} {J. Chem. Phys.},\ }\textbf
  {\bibinfo {volume} {125}},\ \bibinfo {pages} {114105(1--9)} (\bibinfo {year}
  {2006})}\BibitemShut {NoStop}%
\bibitem [{\citenamefont {{Canc\`{e}s}}\ \emph {et~al.}(2007)\citenamefont
  {{Canc\`{e}s}}, \citenamefont {Legoll},\ and\ \citenamefont
  {Stoltz}}]{Cances:07}%
  \BibitemOpen
  \bibfield  {author} {\bibinfo {author} {\bibfnamefont {E.}~\bibnamefont
  {{Canc\`{e}s}}}, \bibinfo {author} {\bibfnamefont {F.}~\bibnamefont
  {Legoll}}, \ and\ \bibinfo {author} {\bibfnamefont {G.}~\bibnamefont
  {Stoltz}},\ }\bibfield  {title} {\enquote {\bibinfo {title} {Theoretical and
  numerical comparison of sampling methods for molecular dynamics},}\
  }\href@noop {} {\bibfield  {journal} {\bibinfo  {journal} {Math. Model.
  Numer. Anal.},\ }\textbf {\bibinfo {volume} {41}},\ \bibinfo {pages}
  {351--390} (\bibinfo {year} {2007})}\BibitemShut {NoStop}%
\bibitem [{\citenamefont {Kaplan}(2003)}]{Kaplan:03}%
  \BibitemOpen
  \bibfield  {author} {\bibinfo {author} {\bibfnamefont {I.}~\bibnamefont
  {Kaplan}},\ }\href@noop {} {\emph {\bibinfo {title} {Handbook of Molecular
  Physics and Quantum Chemistry}}}\ (\bibinfo  {publisher} {Wiley},\ \bibinfo
  {year} {2003})\BibitemShut {NoStop}%
\bibitem [{\citenamefont {Lebowitz}\ and\ \citenamefont
  {Spohn}(1999)}]{Lebowitz:99}%
  \BibitemOpen
  \bibfield  {author} {\bibinfo {author} {\bibfnamefont {J.~L.}\ \bibnamefont
  {Lebowitz}}\ and\ \bibinfo {author} {\bibfnamefont {H.}~\bibnamefont
  {Spohn}},\ }\bibfield  {title} {\enquote {\bibinfo {title} {A
  {G}allavotti-{C}ohen type symmetry in the large deviation functional for
  stochastic dynamics},}\ }\href@noop {} {\bibfield  {journal} {\bibinfo
  {journal} {J. Stat. Phys.},\ }\textbf {\bibinfo {volume} {95}},\ \bibinfo
  {pages} {333--365} (\bibinfo {year} {1999})}\BibitemShut {NoStop}%
\bibitem [{\citenamefont {Sweet}\ \emph {et~al.}(2011)\citenamefont {Sweet},
  \citenamefont {Chatterjee}, \citenamefont {Xu}, \citenamefont {Bisordi},
  \citenamefont {Rosen},\ and\ \citenamefont {Alber}}]{Alber:11}%
  \BibitemOpen
  \bibfield  {author} {\bibinfo {author} {\bibfnamefont {C.~R.}\ \bibnamefont
  {Sweet}}, \bibinfo {author} {\bibfnamefont {S.}~\bibnamefont {Chatterjee}},
  \bibinfo {author} {\bibfnamefont {Z.}~\bibnamefont {Xu}}, \bibinfo {author}
  {\bibfnamefont {K.}~\bibnamefont {Bisordi}}, \bibinfo {author} {\bibfnamefont
  {E.~D.}\ \bibnamefont {Rosen}}, \ and\ \bibinfo {author} {\bibfnamefont
  {M.}~\bibnamefont {Alber}},\ }\bibfield  {title} {\enquote {\bibinfo {title}
  {{Modelling platelet-blood flow interaction using the subcellular element
  Langevin method}},}\ }\href@noop {} {\bibfield  {journal} {\bibinfo
  {journal} {{J. Royal Society Interface}},\ }\textbf {\bibinfo {volume}
  {{8}}},\ \bibinfo {pages} {{1760--1771}} (\bibinfo {year}
  {{2011}})}\BibitemShut {NoStop}%
\bibitem [{\citenamefont {Brunger}\ \emph {et~al.}(1984)\citenamefont
  {Brunger}, \citenamefont {Brooks},\ and\ \citenamefont
  {Karplus}}]{Brunger:84}%
  \BibitemOpen
  \bibfield  {author} {\bibinfo {author} {\bibfnamefont {A.}~\bibnamefont
  {Brunger}}, \bibinfo {author} {\bibfnamefont {C.~B.}\ \bibnamefont {Brooks}},
  \ and\ \bibinfo {author} {\bibfnamefont {M.}~\bibnamefont {Karplus}},\
  }\bibfield  {title} {\enquote {\bibinfo {title} {Stochastic boundary
  conditions for molecular dynamics simulations of {ST2} water},}\ }\href@noop
  {} {\bibfield  {journal} {\bibinfo  {journal} {Chem. Phys. Lett.},\ }\textbf
  {\bibinfo {volume} {105}},\ \bibinfo {pages} {495--500} (\bibinfo {year}
  {1984})}\BibitemShut {NoStop}%
\bibitem [{\citenamefont {Ziff}\ \emph {et~al.}(1986)\citenamefont {Ziff},
  \citenamefont {Gulari},\ and\ \citenamefont {Barshad}}]{Ziff:86}%
  \BibitemOpen
  \bibfield  {author} {\bibinfo {author} {\bibfnamefont {R.~M.}\ \bibnamefont
  {Ziff}}, \bibinfo {author} {\bibfnamefont {E.}~\bibnamefont {Gulari}}, \ and\
  \bibinfo {author} {\bibfnamefont {Y.}~\bibnamefont {Barshad}},\ }\bibfield
  {title} {\enquote {\bibinfo {title} {Kinetic phase transitions in an
  irreversible surface-reaction model},}\ }\href@noop {} {\bibfield  {journal}
  {\bibinfo  {journal} {Phys. Rev. Lett.},\ }\textbf {\bibinfo {volume} {56}},\
  \bibinfo {pages} {2553} (\bibinfo {year} {1986})}\BibitemShut {NoStop}%
\bibitem [{\citenamefont {Chan}\ \emph {et~al.}(1997)\citenamefont {Chan},
  \citenamefont {Saltelli},\ and\ \citenamefont {Tarantola}}]{Chan:97}%
  \BibitemOpen
  \bibfield  {author} {\bibinfo {author} {\bibfnamefont {K.}~\bibnamefont
  {Chan}}, \bibinfo {author} {\bibfnamefont {A.}~\bibnamefont {Saltelli}}, \
  and\ \bibinfo {author} {\bibfnamefont {S.}~\bibnamefont {Tarantola}},\
  }\bibfield  {title} {\enquote {\bibinfo {title} {Sensitivity analysis of
  model output: variance-based methods make the difference},}\ }\href@noop {}
  {\bibfield  {journal} {\bibinfo  {journal} {Proc. of the 29th conf. on Winter
  simulation},\ \bibinfo {pages} {261--268}} (\bibinfo {year}
  {1997})}\BibitemShut {NoStop}%
\bibitem [{\citenamefont {Saltelli}\ \emph {et~al.}(2008)\citenamefont
  {Saltelli}, \citenamefont {Ratto}, \citenamefont {Andres}, \citenamefont
  {Campolongo}, \citenamefont {Cariboni}, \citenamefont {Gatelli},
  \citenamefont {Saisana},\ and\ \citenamefont {Tarantola}}]{Saltelli:08}%
  \BibitemOpen
  \bibfield  {author} {\bibinfo {author} {\bibfnamefont {A.}~\bibnamefont
  {Saltelli}}, \bibinfo {author} {\bibfnamefont {M.}~\bibnamefont {Ratto}},
  \bibinfo {author} {\bibfnamefont {T.}~\bibnamefont {Andres}}, \bibinfo
  {author} {\bibfnamefont {F.}~\bibnamefont {Campolongo}}, \bibinfo {author}
  {\bibfnamefont {J.}~\bibnamefont {Cariboni}}, \bibinfo {author}
  {\bibfnamefont {D.}~\bibnamefont {Gatelli}}, \bibinfo {author} {\bibfnamefont
  {M.}~\bibnamefont {Saisana}}, \ and\ \bibinfo {author} {\bibfnamefont
  {S.}~\bibnamefont {Tarantola}},\ }\href@noop {} {\emph {\bibinfo {title}
  {Global Sensitivity Analysis. {T}he Primer}}}\ (\bibinfo  {publisher}
  {Wiley},\ \bibinfo {year} {2008})\BibitemShut {NoStop}%
\end{thebibliography}

%

\end{document}